\newtheorem{theorem}{Theorem}[section]
\newtheorem{lemma}[theorem]{Lemma}
\newtheorem{claim}[theorem]{Claim}
\theoremstyle{definition}
\newtheorem{definition}[theorem]{Definition}
\theoremstyle{plain}
\newcommand{\bbR}{\mathbb{R}}
\newcommand{\bbZ}{\mathbb{Z}}
\newcommand{\Exp}{\mathop{\mathbb{E}}}
\newcommand{\eps}{\varepsilon}
\newcommand{\IR}{\mathbb{R}}
\newcommand{\OPT}{\mathtt{opt}}
\newcommand{\cd}{\texttt{cd}}
\newcommand{\wdep}{\texttt{wd}}
\newcommand{\cD}{\mathcal{D}}
\newcommand{\cU}{\mathcal{U}}
\newcommand{\cB}{\mathcal{B}}
\newcommand{\cG}{\mathcal{G}}
\newcommand{\disks}{\mathcal{D}}
\newcommand{\arcs}{\mathcal{A}}
\newcommand{\Oe}{O_{\eps}}
\newcommand{\Ot}{\widetilde{O}}
\title{A Bouquet of Results on Maximum Range Sum: General Techniques and Hardness Reductions}
\author{
  Rachana Gusain\thanks{Rachana Gusain is supported by the Prime Minister's Research Fellowship (PMRF). Email: \texttt{rachanag@iisc.ac.in}} \and
  Saladi Rahul\thanks{Saladi Rahul would like to thank ``Walmart Center for Tech Excellence'' and ``Anusandhan National Research Foundation (ANRF) CRG/2023/005776''. Email: \texttt{saladi@iisc.ac.in}} \and
  Aditya Subramanian\thanks{Email: \texttt{adityasubram@iisc.ac.in}}
}
\date{Indian Institute of Science, Bengaluru}
\begin{document}
\maketitle
\begin{abstract}
In this work we revisit the {\em maximum range sum} (MaxRS) problem which is well studied by the database and the computational geometry communities. The input is a set $P$ of $n$ weighted points in $\IR^d$ and a geometric range $Q$ (typically either an axis-aligned $d$-box or a $d$-ball). The goal is to design a fast algorithm to place $Q$ in $\IR^d$ so that the total weight of the points of $P$ inside $Q$ is maximized. We consider three natural variations of the MaxRS problem:

\begin{enumerate}
    \item In the {\em dynamic} MaxRS problem, points are inserted and deleted, and the goal is to
        efficiently update the placement of a $d$-ball. In $\IR^d$ we present a randomized
        $(\frac{1}{2}-\eps)$-approximation algorithm with update time $\Oe(\log n)$.  The
        approximation factor holds with high probability. To the best of our knowledge, this problem
        was not studied before in the literature.
    
    \item In the {\em batched} MaxRS problem in $\IR^1$, along with the points in $P$ we are given
        $m$ intervals of different lengths. The goal is to solve the MaxRS problem for each
        interval. We establish a conditional lower bound of $\Omega(mn)$ time for this problem,
        assuming the hardness of $(\min,+)$-convolution problem. Interestingly, this implies that
        the trivial upper bound of $O(mn\log n)$ for batched MaxRS in $\IR^2$ is almost-tight. A
        similar lower bound is established for a related problem of {\em batched smallest
        $k$-enclosing interval}.
    
    \item In the {\em colored} MaxRS problem in $\IR^d$, each point in $P$ is assigned a color from
        $\{1,2,\ldots,m\}$ and the goal is to find the placement of a $d$-ball $Q$ that maximizes
        the number of \emph{uniquely colored} points in $P\cap Q$. Prior work on this problem was
        limited to axis-aligned rectangle $Q$ in $\IR^2$. We obtain two new results for $d$-balls.
        The first result is a randomized $(\frac{1}{2}-\eps)$-approximation algorithm with running
        time $\Oe(n\log n)$. Interestingly, the exponential dependence of $\log n$ on $d$ is avoided
        in the running time. The second result improves upon the first result in $\IR^2$ by
        providing a $(1-\eps)$-approximation algorithm with expected running time $\Oe(n\log n)$.
        The approximation factor holds with high probability for both results.

\end{enumerate}

Our algorithms are obtained via two general techniques which we believe will be useful for solving other variants of MaxRS. The first technique provides a $(\frac{1}{2}-\eps)$-approximation guarantee. The analysis relies on a volume argument involving $d$-balls and a randomized game. The second technique provides a $(1-\eps)$-approximation guarantee and works in two phases. In the first phase, we design an exact output-sensitive algorithm, and in the second phase, we speed up the exact algorithm by random sampling on colors.
\end{abstract}

\newpage
\section{Introduction}
{\em Maximum range sum} (MaxRS) is a popular geometric placement problem that has been widely studied by the spatial database community~\cite{choi2012scalable, choi2014maximizing, tao2013approximate, AmagataH16, amagata2017general, MostafizMHAT17, nakayama2017probabilistic, liu2019probabilistic, liu2024sweepline, zhang2022maximizing, hussainEBDT17contMaxRS, tahmasbi2022dynamic} and the computational geometry community~\cite{imai1983finding, nandy1995unified, chazelle1986circle, bdp97, agarwal2002translating, chan2008slightly, de2009covering, jin2018near, zhang2025approximation}. The input is a set $P$ of $n$ weighted points in 
$\IR^d$ and a geometric range $Q$ (typically either an 
axis-aligned $d$-box or a $d$-ball). The goal is to design
a fast algorithm to place $Q$ in $\IR^d$ so that the 
total weight of the points of $P$ inside $Q$ is maximized.
In the unweighted setting, the goal is to place $Q$ 
to cover the maximum number of points. See Figure~\ref{fig:maxrs-ex1}.
In our work, the dimension $d$ is small (say $5$ or $8$)
and treated as a constant.

\begin{figure}[!ht]
\centering
\begin{subfigure}{0.4\textwidth}
\centering
\includegraphics[width=0.7\textwidth]{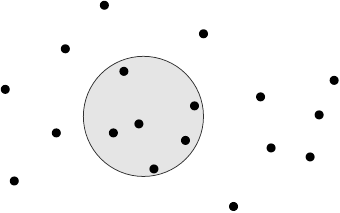}
\caption{MaxRS with unit weights. $\OPT=6$.}
\label{fig:maxrs-ex1}
\end{subfigure}
\hspace{0.05\textwidth}
\begin{subfigure}{0.4\textwidth}
\centering
\includegraphics[width=0.7\textwidth]{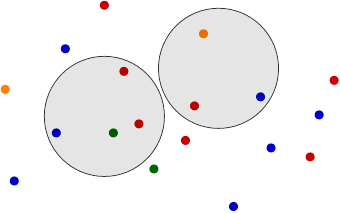}
\caption{Colored MaxRS with $\OPT=3$.}
\label{fig:maxrs-ex2}
\end{subfigure}
\caption{MaxRS with $Q$ being disks.}
\label{fig:maxrs-example}
\end{figure}

MaxRS shows up naturally in many spatial data analysis tasks, 
which involve detecting hotspots or regions of high activity or concentration.
For example, during COVID times there were applications built by 
government agencies which had real-time data of the 
geographic location of the people currently 
infected
and would display ``hotspots'' which had maximum 
patients. MaxRS can help 
political parties and companies make policy and 
infrastructure decisions. For example,
Walmart has spatial data which contains the geographic
location of its costumers. A MaxRS query
can reveal the region with maximum number of costumers
and hence, the decision of opening a new outlet can be made.

 The MaxRS problem has primarily been studied for two classes of geometric ranges: axis-aligned $d$-dimensional hyper-rectangles
 ($d$-boxes for brevity) and $d$-dimensional euclidean balls 
 ($d$-balls for brevity). The problem can trivially be solved 
 in polynomial time (in terms of $n$) 
 for $d$-boxes and $d$-balls, where $d$ is a constant. 
 For example, for $2$-box (a.k.a. rectangle in $\IR^2$)
 the optimal placement of $Q$ is defined by at most four points, 
 one point on each edge of the rectangle. The work on MaxRS 
 was first initiated by~\cite{imai1983finding} who presented
 an exact optimal $O(n\log n)$ time algorithm for $2$-box 
 (and later by~\cite{nandy1995unified}). The extension 
 to $d$-box ($d\geq 3$) was done in \cite{chan2008slightly} with 
 running time $\widetilde{O}({n^{d/2}})$. For $2$-ball 
 (a.k.a. disk in $\IR^2$), Chazelle and Lee presented an 
 exact $O(n^2)$ time algorithm~\cite{chazelle1986circle}, 
 which was later shown to be optimal~\cite{aronov2008approximating} (assuming hardness of 3SUM problem). Choi, Chung and Tao~\cite{choi2014maximizing}
 perform an extensive study of the MaxRS problem in the 
 I/O-model.
 \cite{esz94,bdp97,ChanH21,zhang2025approximation, FigueiredoF09almost-unitball, EcksteinHLNS02maxbox-data-analysis} are some other relevant
 papers on this problem.
 
Parallel efforts have been 
made to design $(1-\eps)$-approximate solutions with near-linear 
running time in terms of $n$. 
Given an $\eps$ satisfying $0 < \eps < 1$, the goal in the $(1-\eps)$-approximate MaxRS problem is to place $Q$ so that the total weight of points in $P\cap Q$ is at least $(1-\eps)\cdot\OPT$, where $\OPT$ is the total 
weight in the optimal placement of $Q$. 
These include random 
sampling based algorithms~\cite{agarwal2002translating, aronov2008approximating,AgarwalS21,tao2013approximate} 
and deterministic algorithms based on 
$\eps$-approximations~\cite{de2009covering, jin2018near}.

In this work we consider three natural variants of the 
MaxRS problem: dynamic MaxRS, batched MaxRS and colored MaxRS.

\subsection{Dynamic MaxRS}
In the {\em dynamic} MaxRS problem, a new 
point is inserted into $P$ or an existing point from $P$ is deleted
in each round, and the goal is to design an algorithm which can 
quickly update the optimal placement of $Q$. The naive approach
would be to re-compute the solution from scratch which requires
$\Omega(n)$ time, whereas the goal is to re-compute the solution
in sub-linear time. Note that $Q$ does not change in this setting.
Going back to the COVID example, location of 
recovered patients are deleted
and location of 
newly infected patients are inserted, and the authorities
need to be reported about the updated hotspot in real-time.
To the best of our knowledge, dynamic MaxRS has not been 
explicitly studied before. We obtain the following result.

\begin{theorem}\label{thm:dyn-d-ball}
For dynamic MaxRS with $Q$ being a $d$-ball, 
there is a randomized $(\frac{1}{2}-\eps)$-approximation
algorithm 
with amortized update time $O(\eps^{-2d-2}\log n)$, 
for $0 < \eps < 1/2$. 
The approximation factor holds with high probability, 
i.e., $1-n^{-\Omega(1)}$. Here $n$ is the number of 
points of $P$ in a given round.
\end{theorem}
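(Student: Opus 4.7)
The plan is to reduce dynamic MaxRS to maintaining a pool of weight counters over a randomly shifted grid. Fix the query radius $r$ and lay down on $\IR^d$ an axis-aligned grid of cells of side length $s = \Theta(\eps r)$ with a uniformly random shift $\tau \in [0,s)^d$; attach to each grid cell $C$ a candidate $d$-ball $B_C$ of radius $r$ centered at $C$'s center, and keep a counter for the weight of $P \cap B_C$. Active counters (cells containing at least one input point) are stored in a hash table indexed by grid coordinates together with a max-heap keyed on weight; the algorithm reports the top of the heap as its answer. When a point $p$ of weight $w$ is inserted or deleted, the only affected candidates are those whose centers lie within distance $r$ of $p$; these can be enumerated directly from $p$'s coordinates and $\tau$ in $O((r/s)^d) = O(\eps^{-d})$ time and their counters updated in $O(\log n)$ time each via the heap. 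Cells are created lazily on first insertion and garbage-collected when their counter returns to zero, keeping the active set at size $O(n)$.

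Let $B^*$ be the optimal ball of weight $\OPT$ and center $O^*$. The core lemma --- the \emph{randomized game} --- is that the candidate $B_c$ attached to the cell containing $O^*$ has expected weight at least $(\frac{1}{2}-\eps)\OPT$ over the random shift $\tau$. By linearity of expectation it suffices to show that for each $p \in B^*$, $\Pr_\tau[p \in B_c] \geq \frac{1}{2} - \eps$; this probability equals $\mathrm{vol}(B_r(p) \cap \mathrm{cell}(O^*))/\mathrm{vol}(\mathrm{cell})$, which I would lower-bound by $\frac{1}{2}-\eps$ via a volume comparison between the $d$-ball $B_r(p)$ and the cubical cell of side $\Theta(\eps r)$ in which $O^*$ sits, leveraging the fact that $p, O^* \in B_r(O^*)$. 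Summing over all $p \in B^*$ gives $\Exp[\mathrm{weight}(B_c)] \geq (\frac{1}{2}-\eps)\OPT$, and since the algorithm returns the maximum-weight candidate its output is at least as good in expectation.

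To convert expectation into a $1 - n^{-\Omega(1)}$ guarantee, I would run $\Theta(\eps^{-d-2}\log n)$ independent randomly shifted grids in parallel and report the overall best; the $\eps^{-d}$ counter-update factor per grid and the amplification factor multiply to the stated $O(\eps^{-2d-2}\log n)$ amortized update time. The main technical obstacle I expect is the volume/randomized-game lemma itself, and especially the $\frac{1}{2}$ constant: this loss is inherent to any scheme that designates the cell of $O^*$ as the winner, because $O^*$ can sit arbitrarily close to a cell boundary and split the mass of $B^*$ almost evenly between two adjacent candidates, so no deterministic rule can beat $\frac{1}{2}$ in the worst case. The $\eps$ slack is gained by shrinking $s$ relative to $r$ to absorb the boundary effect when $p$ sits near the equator of $B^*$. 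Once the lemma is in hand, Chernoff-style amplification and the routine dynamic hash-plus-heap bookkeeping finish the proof.
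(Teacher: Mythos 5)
Your approach is genuinely different from the paper's, and its geometric core is sound. The paper works in the dual setting (unit balls centered at the input points), uses a \emph{fixed} family of $O(\eps^{-d})$ shifted grids, samples $\Theta(\eps^{-2}\log n)$ random points on the circumsphere of every non-empty cell, and proves via a spherical-cap surface-measure bound plus a Chernoff-and-union-bound ``game'' that some sampled point has weighted depth $(\frac12-\Theta(\eps))\OPT$ w.h.p. You instead stay in the primal, attach one candidate ball per cell of a \emph{randomly shifted} grid, and argue in expectation over the shift: the center $c$ of $O^*$'s cell is uniform in a cube $Q$ of side $s=\Theta(\eps r)$ around $O^*$, and for each $p\in B^*$ one has $\Pr[p\in B_c]=\mathrm{vol}(B_r(p)\cap Q)/\mathrm{vol}(Q)\ge\frac12-\Theta(\eps)$ because $\partial B_r(p)$ deviates from its tangent hyperplane by only $O(s^2/r)=O(\eps^2 r)$ across $Q$, and that hyperplane passes through $Q$'s center. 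This is a legitimate cube-volume analogue of the paper's cap-area lemma, and it buys a simpler data structure (one counter per candidate ball instead of $\Theta(\eps^{-2}\log n)$ sample points per cell).

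There are, however, real gaps to close. First, the passage from $\Exp_\tau[w(B_c)]\ge(\frac12-\eps)\OPT$ to a $1-n^{-\Omega(1)}$ guarantee is not ``Chernoff-style'': $w(B_c)$ is a single bounded random variable, not a sum of independent terms, and it can be genuinely bimodal (put all the weight at one point of $\partial B^*$; then $w(B_c)\in\{0,\OPT\}$, each with probability about $1/2$). You need reverse Markov on $\OPT-w(B_c)$ to get a per-grid success probability of $\Omega(\eps)$ for the relaxed threshold $(\frac12-2\eps)\OPT$, after which $O(\eps^{-1}\log n)$ independent grids suffice; your $\Theta(\eps^{-d-2}\log n)$ grids appear to be chosen only to make the exponents match the theorem. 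Second, your time accounting is internally inconsistent: $O(\eps^{-2d-2}\log n)$ touched counters at $O(\log n)$ heap cost each gives $O(\eps^{-2d-2}\log^2 n)$, a $\log n$ factor over the claim; you need $O(1)$ per counter touch, and maintaining the running maximum under \emph{deletions} is exactly where this is nontrivial. Third, the number of grids depends on $\log n$ while $n$ changes over time, so without an epoch/rebuild mechanism of the kind the paper uses, neither the amortized bound nor the per-round failure probability is justified as $n$ doubles or halves. All three are fixable with standard tools, but as written the proof is not complete.
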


As a consequence of the above theorem, we obtain a new 
result for static MaxRS (where the set $P$ is known 
upfront) with $Q$ being a $d$-ball.

\begin{theorem}\label{thm:stat-d-ball}
For static MaxRS with $Q$ being a $d$-ball, 
there is a randomized $(\frac{1}{2}-\eps)$-approximation
algorithm 
with running time $O(\eps^{-2d-2}n\log n)$, 
for $0 < \eps < 1/2$. 
The approximation factor holds with high probability.
\end{theorem}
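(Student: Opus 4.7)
The plan is to obtain Theorem~\ref{thm:stat-d-ball} as a direct corollary of Theorem~\ref{thm:dyn-d-ball} by simulation. Given the static input $P$ of $n$ points, I would initialise the dynamic data structure on the empty set and then feed in the points of $P$ one at a time as insertions; no deletions are required since $P$ is known in advance. After the last insertion, the algorithm simply reports the approximate placement of the $d$-ball $Q$ currently maintained by the dynamic structure.

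For the running-time bound, Theorem~\ref{thm:dyn-d-ball} provides an amortized update cost of $O(\eps^{-2d-2}\log n)$ per operation, where $n$ denotes the number of points presently in the structure. Summing this cost over the $n$ insertions (and noting that the current size never exceeds $n$) yields total work $O(\eps^{-2d-2}\, n \log n)$, which is precisely the bound claimed in the theorem. Here it is important that the bound from Theorem~\ref{thm:dyn-d-ball} is amortized rather than worst-case per operation, so summing is allowed without losing a factor.

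For the approximation guarantee, observe that we only need the $(\tfrac{1}{2}-\eps)$-factor to hold in the single round that corresponds to the static instance $P$, namely the one reached after all $n$ insertions are complete. Theorem~\ref{thm:dyn-d-ball} already ensures this with probability $1-n^{-\Omega(1)}$ in that round, so the high-probability bound transfers verbatim to the static setting; in particular no union bound over intermediate rounds is needed. I do not foresee any significant obstacle beyond this bookkeeping, since the static claim is genuinely a degenerate (insertion-only) special case of the dynamic one.
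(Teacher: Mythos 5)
Your proposal is correct and matches the paper's own treatment: the paper explicitly presents Theorem~\ref{thm:stat-d-ball} as a consequence of Theorem~\ref{thm:dyn-d-ball}, and the insertion-only simulation with the amortized cost summed over $n$ insertions is exactly the intended argument. Your observation that the approximation guarantee only needs to hold in the final round (so no union bound over intermediate rounds is required) is a correct and worthwhile piece of bookkeeping.
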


Previous constructions
can be extended to obtain a $(1-\eps)$-approximation algorithm
for static MaxRS with $Q$ being a $d$-ball, but they have a running
time of $O_{\eps}(n\log^{\Theta(d)}n)$.
Interestingly, in Theorem~\ref{thm:stat-d-ball} 
the exponential dependence of
$\log n$ on $d$ is avoided in the running time (at the expense
of approximation factor).

\subsection{Batched MaxRS}

In the {\em batched} MaxRS problem, along with the point set $P$, we are given a set
$\{Q_1,Q_2,\dots,Q_m\}$ of $m$ geometric ranges, 
and the goal is to find a placement for each range $Q_i$
which maximizes the weight of points of $P$ covered by it. 
Our original motivation for studying batched MaxRS
was $Q_i$'s being axis-aligned rectangles in $\IR^2$. 
By running the $O(n\log n)$ time algorithm~\cite{nandy1995unified,imai1983finding}
for MaxRS for each $Q_i$, the batched MaxRS problem 
can be solved in $O(mn\log n)$ time.
The following natural question arises:
\begin{center}
{\em Does there exist an $o(mn)$ time algorithm for batched
MaxRS with axis-aligned rectangles in $\IR^2$?}
\end{center}    

The punchline is that even in $\IR^1$ this looks 
unlikely. Specifically, we establish an $\Omega(mn)$ time 
conditional lower bound for batched MaxRS in $\IR^1$
assuming hardness
of the popular {\em $(\min,+)$-convolution} problem.

\begin{sloppypar}
\paragraph{The $(\min,+)$-convolution problem.}
Given two sequences $A = (A_0, A_1, \dots, A_{n-1})$ and $B = (B_0, B_1, \dots, B_{n-1})$, their
$(\min,+)$-convolution produces a new sequence \( C \), where each element is defined as:
\[ C_k = \min_{i + j = k} (A_i + B_j), \]
for $k\in\{0,\dots,n-1\}$.
\end{sloppypar}

This problem admits a trivial quadratic time algorithm, but has been conjectured to be quadratic hard,
i.e., there exists no $O(n^{2-\epsilon})$ algorithm for it, for any constant
$\epsilon>0$~\cite{CyganMWW19}. Till date the conditional 
hardness of sevaral problems has been established via
the $(\min,+)$-convolution problem~\cite{LaberRC14,KunnemannPS17,BackursIS17,CyganMWW19,ChanH21,JansenR23}.

\begin{theorem}\label{thm:batched-maxrs}
Assuming that $(\min,+)$-convolution on sequences of length $n$ requires $\Omega(n^2)$ time, the batched MaxRS problem on $n$ points and $m$ query interval lengths
requires $\Omega(mn)$ time.
\end{theorem}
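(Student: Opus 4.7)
The plan is to reduce $(\min,+)$-convolution on length-$n$ sequences to batched MaxRS on $\Theta(n)$ weighted points in $\IR^1$ with $\Theta(n)$ interval-length queries. The idea is to encode the anti-diagonal index $i+j$ at a coarse length scale $M$ and the value $A_i+B_j$ at a fine offset, so that the $k$-th MaxRS query both pins down the anti-diagonal $i+j=k$ and drives the interior maximization to the pair minimizing $A_i+B_j$.

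Concretely, given $A=(A_0,\dots,A_{n-1})$ and $B=(B_0,\dots,B_{n-1})$, let $V=\max_i A_i+\max_j B_j$, $M=2V+1$, and $T=V+1$. Place $n$ points $a_i$ at positions $p_i=-iM-A_i$ on the left and $n$ points $b_j$ at $q_j=jM+B_j$ on the right, with telescoping weights $w_A(0)=M-A_0$, $w_A(i)=M-(A_i-A_{i-1})$ for $i\ge 1$, and $w_B(j)$ defined symmetrically. The choice of $M$ makes every weight non-negative; the $p_i$ strictly decrease in $i$, the $q_j$ strictly increase in $j$, and $p_0<0<q_0$. The prefix sums telescope to $W_A(i)=(i+1)M-A_i$ and $W_B(j)=(j+1)M-B_j$. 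For $k=0,1,\dots,2n-2$, issue query length $L_k=kM+T$.

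The central claim is that $\mathrm{MaxRS}(L_k)=(k+2)M-C_k$, where $C_k=\min_{i+j=k}(A_i+B_j)$. Since the $a$-block lies entirely left of the $b$-block and both blocks are sorted by index, any interval of length $L_k$ capturing at least one point from each block must in fact contain a full prefix $a_0,\dots,a_i$ together with $b_0,\dots,b_j$: its minimum length is $q_j-p_i=(i+j)M+A_i+B_j$ and its total weight equals $W_A(i)+W_B(j)=(i+j+2)M-(A_i+B_j)$. The constraint $(i+j)M+A_i+B_j\le kM+T$ combined with $T<M$ forces $i+j\le k$; every pair with $i+j=k$ is feasible because $A_i+B_j\le V<T$. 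Maximizing the weight therefore picks $i+j=k$ and then minimizes $A_i+B_j$, yielding $(k+2)M-C_k$. A short calculation using $M>2V$ shows that intervals containing only $a$-points or only $b$-points fit at most $k+1$ consecutive points and contribute at most $(k+1)M+V$, which is strictly less than $(k+2)M-C_k\ge(k+2)M-V$, so mixed prefix intervals always dominate.

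Each $C_k$ is then read off from $\mathrm{MaxRS}(L_k)$ in $O(1)$ arithmetic, so an $o(mn)=o(n^2)$ algorithm for batched MaxRS on $2n$ points and $2n-1$ queries would yield an $o(n^2)$-time $(\min,+)$-convolution, contradicting the conjecture; this establishes $\Omega(mn)$ when $m=\Theta(n)$. For the asymmetric regime $m<n$ one runs the analogous construction with $|A|=n$ and $|B|=m$, reducing from the asymmetric $(\min,+)$-convolution of the corresponding dimensions. The principal subtleties will be (i) verifying that the telescoping weights stay non-negative for arbitrary non-monotone inputs, which is purely a question of the size of $M$, and (ii) the case analysis showing that mixed prefix intervals strictly dominate one-sided intervals and that no interval with $i+j>k$ fits inside length $L_k$; both reduce to the scale gap $M\gg V$.
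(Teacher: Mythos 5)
Your core gadget is correct but genuinely different from the paper's. The paper proves the theorem via a chain of four reductions ($(\min,+)$-convolution $\to$ $(\min,+,M)$ $\to$ $(\max,+,M)$ $\to$ positive $(\max,+,M)$ $\to$ batched MaxRS), and its final gadget places each value $A_i$ at an integer coordinate with a ``guard'' point of weight $-A_i$ at offset $0.5$, so that an interval of length $2n-1-k$ cancels all interior pairs and nets exactly $A_i+B_j$. Your telescoping construction replaces the cancellation trick with prefix sums $W_A(i)=(i+1)M-A_i$ and encodes the anti-diagonal in a coarse length scale $M$; this collapses the min-to-max and positivity conversions into the weight encoding itself, and as a bonus uses only \emph{strictly positive} weights, whereas the paper's guards are negative. (One quantitative nit: with $M=2V+1$ and $T=V+1$ a one-sided interval can, at the boundary case $A_{i+k+1}-A_i=-V$, capture $k+2$ consecutive $a$-points and tie or beat the mixed interval; taking $M=2V+2$ removes this.)

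The genuine gap is in the regime $m=o(n)$, which the theorem statement covers and which you dispatch in one sentence. Replacing $B$ by a length-$m$ sequence does not produce an instance with $m$ query lengths: the output indices $k=i+j$ of the rectangular convolution range over $\{0,\dots,n+m-2\}$, so your construction would still issue $\Theta(n)$ queries, and restricting attention to only $m$ of those outputs is precisely the $(\min,+,M)$-convolution problem, whose hardness you have not established. The paper closes this by its first reduction: partition the $n$ output indices of a \emph{square} convolution into $\lceil n/m\rceil$ batches of $m$, and invoke the ($n$-point, $m$-query) batched MaxRS oracle once per batch; an $o(mn)$ oracle then yields an $o(mn)\cdot n/m=o(n^2)$ algorithm for $(\min,+)$-convolution. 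Your gadget is compatible with this batching (just issue the $m$ query lengths $L_k$ for the $k$ in the current batch), so the fix is routine, but as written the asymmetric case does not go through.
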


A similar lower bound is established
for a related problem of {\em batched smallest $k$-enclosing
interval}. In the smallest $k$-enclosing interval (SEI) problem,
the input is a set $P$ of $n$ points on the real-line and 
and an integer $k$ in the range $[1,n]$. The goal is to 
find the smallest interval which contains $k$ points of $P$.
In the batched smallest $k$-enclosing interval problem, 
the goal is to solve the SEI problem
for {\em all} integer values of $k$ from $1$ till $n$. 
Once again the problem can be solved trivially 
in $O(n^2)$ time. We establish a conditional
lower bound  of $\Omega(n^2)$ for this problem as well.

\begin{theorem}\label{thm:bsei}
Assuming that $(\min,+)$-convolution on sequences of length $n$ requires $\Omega(n^2)$ time, the
batched smallest k-enclosing interval problem on $n$ points requires $\Omega(n^2)$ time.
\end{theorem}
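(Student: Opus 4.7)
The plan is to give an $O(n)$-time reduction from $(\min,+)$-convolution on sequences of length $n$ to batched SEI on $2n$ points on the real line; an $O(n^{2-\eps})$ algorithm for batched SEI would then immediately yield an $O(n^{2-\eps})$ algorithm for $(\min,+)$-convolution, contradicting the assumed hardness.

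Given non-negative integer sequences $A=(A_0,\ldots,A_{n-1})$ and $B=(B_0,\ldots,B_{n-1})$ with entries bounded by some $M$ polynomial in $n$, I would fix a large scale $W > 4M$ and construct two groups of $n$ points: ``$A$-points'' at coordinates $\alpha_i = -iW - A_i$ and ``$B$-points'' at coordinates $\beta_j = jW + B_j$ for $i,j\in\{0,1,\ldots,n-1\}$. The choice of $W$ ensures the sorted order $\alpha_{n-1} < \cdots < \alpha_0 < \beta_0 < \cdots < \beta_{n-1}$, with the $A$-points occupying disjoint bins to the left of the origin and $B$-points occupying disjoint bins to the right.

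The key structural claim is that for every $k\in\{2,\ldots,2n\}$, the smallest interval containing $k$ of the $2n$ points has length exactly $(k-2)W + C_{k-2}$, where $C_{k-2}=\min_{i+j=k-2,\,0\leq i,j\leq n-1}(A_i+B_j)$. To see this, note that any interval contains $k$ points that are contiguous in the sorted order, so there are three cases: (a) $k$ consecutive $A$-indices $\{p,\ldots,p+k-1\}$, giving length $(k-1)W + A_{p+k-1} - A_p \geq (k-1)W - M$; (b) $k$ consecutive $B$-indices, giving analogously at least $(k-1)W - M$; and (c) a mix $\{s-1,s-2,\ldots,0\}$ of the rightmost $s\geq 1$ $A$-indices together with the leftmost $t\geq 1$ $B$-indices, where $s+t=k$, giving length $\beta_{t-1}-\alpha_{s-1} = (s+t-2)W + A_{s-1} + B_{t-1} \leq (k-2)W + 2M$. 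Since $W > 4M$ forces $(k-2)W + 2M < (k-1)W - M$ for every $k\geq 2$, the optimum is always of mixed type, and minimizing over $s+t=k$ isolates $C_{k-2}$ in the interval length.

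From the batched SEI output, $C_{k-2}$ is then read off in $O(1)$ per $k$, recovering the full convolution $C_0,\ldots,C_{2n-2}$ in $O(n)$ post-processing. The construction itself takes $O(n)$ time, completing the reduction. The main obstacle---and the only place that really needs care---is the case analysis ruling out single-type intervals as the optimal $k$-enclosing interval for each $k\geq 2$; this is exactly what the well-separated bin scaling $W\gg M$ enforces, by making the number of occupied $W$-wide bins the dominant term in an interval's length and thereby cleanly encoding the convolution value in the lower-order residual.
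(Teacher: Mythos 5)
Your reduction is correct, and it rests on the same underlying idea as the paper's: add a linear ramp with slope large enough to dominate the fluctuations of $A$ and $B$, so that the points become sorted by index with the two families interleaved around the origin, and the length of the interval covering a contiguous straddling block encodes $A_i+B_j$ plus a term depending only on $i+j$. The packaging differs in two ways. First, the paper factors the argument through an intermediate problem (monotone $(\min,+)$-convolution, obtained by subtracting $i\cdot\Delta$ where $\Delta$ exceeds the maximum consecutive difference), whereas you do a single direct reduction with the coarser but equally valid scale $W>4M$; your version is more self-contained, the paper's isolates a reusable intermediate. Second, and more substantively, you read $C_{k'}$ off from $G_{k'+2}$, i.e., from the \emph{small} entries of the BSEI output, so for $k\le n$ a contiguous block of $k$ points can lie entirely on one side and you genuinely need the case analysis showing single-type intervals cost at least $(k-1)W-M$ while mixed ones cost at most $(k-2)W+2M$. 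The paper instead reads $C_k$ off from $G_{2n-k}$ with $k\le n-1$, i.e., from blocks of more than $n$ points, which are forced to straddle the origin, so that case analysis disappears entirely. Both work; your separation argument is the price of using the other half of the output array, and you identify and handle it correctly (your $W>4M$ indeed gives $(k-2)W+2M<(k-1)W-M$). The only cosmetic caveat is the possible coincidence $\alpha_0=\beta_0$ when $A_0=B_0=0$, which is harmless here (the formula $G_2=C_0=0$ still holds) but worth a remark or a half-unit offset.
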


\subsection{Colored MaxRS}
In the {\em colored} MaxRS problem in $\IR^d$, each point
in $P$ is assigned a color from $\{1,2,\ldots,m\}$
and the goal is to find a placement of $Q$ that
maximizes the number of \emph{uniquely colored} points 
in $P\cap Q$. Note that the points no longer have a weight 
associated. See Figure~\ref{fig:maxrs-ex2} for an 
example.
Colored MaxRS was recently studied in the context of 
trajectory data~\cite{zhang2022maximizing}. They present an exact $O(n\log{n})$ time 
algorithm for $Q$ being an axis-aligned rectangle.

In the context of wildlife 
conservation, consider 
$m$ trajectories each corresponding to say an 
endangered animal. Points are sampled from each trajectory
and assigned a unique color. The objective is to optimally 
position a tracking device to monitor the maximum number of 
animals. See \cite{zhang2022maximizing} for more examples.
Colored MaxRS is also useful for finding neighborhood with 
maximum number of distinct facilities (such as restaurants,
schools, hospitals, parks and fire stations). 

In this work we primarily focus on colored MaxRS with 
$Q$ being $d$-balls. 
Recall that even (uncolored) MaxRS for $Q$ being a $2$-ball
has a conditional lower bound of $\Omega(n^2)$~\cite{aronov2008approximating} 
(and in fact 
we conjecture a conditional lower bound of $\Omega(n^d)$
for colored MaxRS for $Q$ being a $d$-ball).
Therefore, we focus on designing approximation algorithms
with near-linear running time. The goal in the
$\alpha$-approximate colored MaxRS problem is to place
$Q$ so that the number of uniquely colored points in 
$P \cap Q$ is at least $\alpha \cdot \OPT$, where 
$\OPT$ is the number of uniquely colored points in the 
optimal placement of $Q$.
The first result is the following.

\begin{restatable}{theorem}{colhalfapprox}\label{thm:col-balls-half-approx}
For colored MaxRS with $Q$ being a $d$-ball, there is 
a randomized $(\frac{1}{2}-\eps)$-approximation algorithm 
with running time $O(\eps^{-2d-2}n\log n)$, for 
$0 < \eps < 1/2$. The approximation factor
holds with high probability.
\end{restatable}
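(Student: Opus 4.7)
The plan is to apply the paper's first general technique --- a volume argument on $d$-balls together with a randomized game --- to the colored setting, coupled with an approximate distinct-color range-counting data structure. First, I fix the query radius $r$ of $Q$ (guessing it by doubling over $O(\log n)$ scales if needed) and overlay $\IR^d$ with a uniformly randomly shifted axis-aligned grid of spacing $\Theta(\eps r)$. I take as candidate ball-centers every grid vertex that lies within Euclidean distance $(1+\eps)r$ of some point of $P$. This yields a candidate set $\mathcal{C}$ of size $O(n\eps^{-d})$, and for the optimal center $c^*$ some $g^*\in\mathcal{C}$ satisfies $\|g^*-c^*\|=O(\eps r)$.

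Second, I preprocess $P$ into a data structure that answers a $(1\pm\eps)$-approximate distinct-color count on any ball query $B(g,r)$ in $O(\eps^{-d-2}\log n)$ time. The construction attaches a distinct-color sketch (Flajolet--Martin / HyperLogLog style, of $O(\eps^{-2}\log n)$ words) to every canonical cell of a compressed quadtree / BBD-tree on $P$; the usual $\eps$-approximate ball range-query machinery decomposes $B(g,r)$ into $O(\eps^{-d+1})$ canonical cells whose sketches merge in $O(\eps^{-2})$ time. Sweeping every $g\in\mathcal{C}$ and returning the center with the maximum estimate uses $O(n\log n)$ preprocessing plus $|\mathcal{C}|\cdot O(\eps^{-d-2}\log n)=O(\eps^{-2d-2}n\log n)$ total query time.

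The approximation guarantee comes from the randomized game. Fix an optimal ball $B^*$ covering $\OPT$ distinct colors and, for each such color $c$, pick a representative $p_c\in B^*\cap P$. The game is: the adversary places these $\OPT$ representatives inside $B^*$, and we ``win'' color $c$ for a candidate $g$ iff $p_c\in B(g,r)$. By the grid shift, $B(g^*,r)$ and $B^*$ differ only in a shell of relative volume $O(\eps)$. A double-counting / integration argument over the random shift shows that on expectation the best candidate captures at least $(\tfrac12-O(\eps))\OPT$ representatives: each $p_c$ contributes $\mathrm{vol}(B^*\cap B(p_c,r))$ to the total coverage, and this pairwise-overlap quantity averages to at least $\tfrac12$ over the random pairing of representatives to candidates. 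Coupling this with Chernoff concentration of the sketches and a union bound over $|\mathcal{C}|$ yields a $(\tfrac12-\eps)\OPT$ guarantee with probability $1-n^{-\Omega(1)}$.

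The hard part is adapting the volume argument from weighted to colored MaxRS. In the weighted setting each point contributes a divisible amount to the overlap volume, but colors are indivisible and may be concentrated adversarially (for example on $\partial B^*$ or on an ``equator'' of $B^*$), so one cannot directly average. The key lemma is a structural statement about the randomized game guaranteeing that at least half of the representatives survive in some single grid-aligned ball of radius $r$ --- this is the novel piece of the argument. A secondary engineering obstacle is designing the per-query colored sketch to take only $O(\eps^{-d-2}\log n)$ time, shaving off the exponential $\log^d n$ factor that standard colored range data structures would incur.
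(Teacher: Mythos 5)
There is a genuine gap, and you have in fact flagged it yourself: the ``key lemma \dots guaranteeing that at least half of the representatives survive in some single grid-aligned ball'' is precisely the heart of the theorem, and the sketch you give for it does not work. Your averaging argument integrates $\sum_c \mathbf{1}[g\in B(p_c,r)]$ over candidate centers $g$ ranging over (a discretization of) the optimal ball $B^*$, so each color contributes the fraction $\mathrm{vol}(B^*\cap B(p_c,r))/\mathrm{vol}(B^*)$ to the average. But this fraction is \emph{not} at least $\tfrac12$: if the adversary places a color representative $p_c$ on $\partial B^*$, two radius-$r$ balls at center distance $r$ overlap in a lens whose volume fraction is a dimension-dependent constant strictly below $\tfrac12$ (already about $0.39$ for $d=2$, and decreasing with $d$), so concentrating all $\OPT$ colors near $\partial B^*$ drives the average --- and hence the best candidate your argument can certify --- below $\tfrac12\,\OPT$. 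The paper's mechanism is different and essential: it averages not over the volume of $B^*$ but over the \emph{surface of a tiny sphere $C$ of radius $\eps$} centered at a point within $\eps^2$ of the optimal center (obtained via a two-scale grid shift, $s=2\eps/\sqrt{d}$ and $\Delta=\eps^2$). At that scale every ball $B(p_c,r)$, which contains a point within $\eps^2$ of $C$'s center, behaves like a halfspace through the center of $C$ and therefore covers a $(\tfrac12-\Theta(\eps))$ fraction of $C$'s $(d{-}1)$-dimensional surface measure (Lemma~\ref{lem:cap-area}); a Chernoff-plus-averaging argument over $O(\eps^{-2}\log n)$ uniform samples on $\partial C$ (Lemma~\ref{lem:game-2-dim}) then exhibits one sample of colored depth at least $(\tfrac12-\Theta(\eps))\OPT$. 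Your candidate set of grid \emph{vertices} at spacing $\Theta(\eps r)$ contains no analogue of these surface samples, so the constant $\tfrac12$ cannot be recovered from the construction as described.

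A secondary point: the approximate distinct-color sketches are unnecessary machinery. Since the algorithm only ever needs the colored depth of $O_\eps(n\log n)$ explicitly listed sample points, the paper computes these depths \emph{exactly} in $O(\eps^{-2d-2}n\log n)$ total time by sorting the balls by color and sweeping the color classes with a per-point ``last color seen'' flag. This avoids folding an extra $(1\pm\eps)$ estimation error into the approximation factor and sidesteps any correctness concerns about mergeable sketches under a union bound over $\Theta(n\eps^{-d})$ queries.
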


The second result improves upon the first result in $\IR^2$
in terms of approximation factor.

\begin{theorem}\label{thm:col-balls-1minus-eps-approx}
For colored MaxRS with $Q$ being a $2$-ball, 
there is 
a randomized $(1-\eps)$-approximation algorithm 
with expected running time $O(\eps^{-2}n\log n)$, for 
$0 < \eps < 1$. The approximation factor
holds with high probability.
\end{theorem}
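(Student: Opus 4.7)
The plan is to follow the two-phase template described in the abstract: an exact, output-sensitive algorithm, then a speed-up via random sampling on colors.

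For the exact output-sensitive step, I would fix the query radius $r$ and observe that a color $c$ is covered by a disk $Q$ centered at $x$ iff $x\in U_c := \bigcup_{p\in P_c} B(p,r)$, where $P_c$ is the set of points of color $c$. Thus colored MaxRS reduces to finding a point of maximum depth in the arrangement of $m$ regions $\{U_c\}$ of \emph{total} boundary complexity $O(n)$ (each $U_c$ is a union of $n_c$ congruent disks, of boundary complexity $O(n_c)$). The target running time is $\widetilde O(n)+\mathrm{poly}(\OPT,\log n)$, i.e., nearly linear in $n$ with an additive overhead depending on the answer size. My plan is an arrangement-walk driven by a priority-queue-based sweep over the $O(n)$ arcs bounding the $U_c$'s, maintaining the current depth via a color-to-count dictionary; the key combinatorial claim to establish is that the deepest cell is certified after processing only $\mathrm{poly}(\OPT,\log n)$ events, so that the $\Theta(n^2)$ arc-arc intersections between distinct $U_c$'s are never enumerated.

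The speed-up phase uses random sampling on colors. I first invoke Theorem~\ref{thm:col-balls-half-approx} with a fixed constant (say $\eps_0=1/4$) to obtain, in $O(n\log n)$ time, an estimate $\widetilde{\OPT}\in[\alpha\OPT,\OPT]$ for an absolute constant $\alpha>0$. Setting $p := c\,\eps^{-2}(\log n)/\widetilde{\OPT}$ (capped at $1$), I sample each color independently with probability $p$, form $P'\subseteq P$ from the points whose colors are sampled, run the Phase~1 algorithm on $(P',r)$, and scale its value by $1/p$. Correctness follows from Chernoff: for any fixed placement of $Q$ containing $k$ colors, the sampled count is $\mathrm{Bin}(k,p)$ and $pk=\Theta(\eps^{-2}\log n)$ for $k$ near $\OPT$, so combined with a union bound over the $O(n^2)$ combinatorially distinct disk placements (each pinned by at most two boundary points of $P$), I get a $(1\pm\eps)$-approximation with probability $1-n^{-\Omega(1)}$.

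For running time, in expectation $|P'|=pn=O(\eps^{-2}n\log n/\OPT)$ and the sampled optimum is $\Theta(\eps^{-2}\log n)$, both w.h.p.\ by concentration, so Phase~1 on $P'$ takes $\widetilde O(\eps^{-2}n\log n/\OPT)+\mathrm{poly}(\eps^{-1}\log n)$, which together with Theorem~\ref{thm:col-balls-half-approx}'s $O(n\log n)$ preprocessing yields the target $O(\eps^{-2}n\log n)$ expected running time. The main obstacle is Phase~1 itself: the naive $\Theta(n^2)$ bound on the $\{U_c\}$-arrangement survives sampling, since sampling shrinks the effective $n$ only by a polylogarithmic factor in the regime $\OPT=\Theta(\eps^{-2}\log n)$ that dominates the analysis. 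A genuinely output-sensitive exact depth algorithm for arrangements of unions of congruent disks is therefore required, and designing this sweep---while proving that it avoids touching the irrelevant arc intersections---will constitute the bulk of the combinatorial work.
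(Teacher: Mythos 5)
Your Phase~2 (color sampling, calibrated by the $(\frac12-\eps)$-approximation of Theorem~\ref{thm:col-balls-half-approx}, with a Chernoff bound plus a union bound over the $O(n^2)$ combinatorially distinct placements) matches the paper's final algorithm essentially step for step, including the expected-runtime accounting $\Exp[|R|]\cdot\OPT = O(\eps^{-2}n\log n)$. The gap is Phase~1, and you have correctly diagnosed it yourself: you need an exact algorithm on the regions $\{U_c\}$ whose cost does not pay for a worst-case quadratic arrangement, you observe that sampling alone does not save you in the regime $\OPT=\Theta(\eps^{-2}\log n)$, and you then leave the required algorithm as ``the bulk of the combinatorial work.'' Your proposed route --- a sweep that \emph{certifies} the deepest cell after only $\mathrm{poly}(\OPT,\log n)$ events and never enumerates the remaining arc--arc intersections --- is a substantially stronger algorithmic claim than what is needed, and there is no evidence offered that such a certification is possible.

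The paper closes this gap not by avoiding the intersections but by proving there are few of them. Two ideas do the work. First, a shifted unit-grid decomposition (Lemma~\ref{lem:grid-shift} with $s=1$, $\Delta=0.25$): within each cell one may discard every disk that contains no corner of the cell without affecting a $0.25$-near optimum, and since each corner has depth at most $\OPT$, at most $4\cdot\OPT$ colors survive per cell (Lemma~\ref{lem:bound-depth}). Second, a combinatorial lemma (Lemma~\ref{lem:two-colors}): any intersection point of $\partial U_R$ with $\partial U_B$ lies at distance at least one from the center of \emph{every} disk in $\cD_R\cup\cD_B$, hence is a vertex of $\partial(\cD_R\cup\cD_B)$; by the linear union complexity of disks this gives $|I(\cD_R,\cD_B)|=O(|\cD_R|+|\cD_B|)$. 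Summing over the $O(\OPT)$ color pairs per cell yields $k=O(n\cdot\OPT)$ total intersections (Lemma~\ref{lem:opt-colors}), so the entire trapezoidal map of the boundary arcs can simply be built and traversed in $O(n\log n + n\cdot\OPT)$ expected time (Lemma~\ref{lem:first-algo}, Theorem~\ref{thm:second-algo}). Note also that the resulting output-sensitive bound is multiplicative, $O(n\log n + n\cdot\OPT)$, not the additive $\widetilde O(n)+\mathrm{poly}(\OPT,\log n)$ you targeted; the multiplicative form is exactly what cancels against $\Exp[|R|]=O(\eps^{-2}n\log n/\OPT')$ in Phase~2, so the weaker bound suffices. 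Without Lemmas~\ref{lem:bound-depth} and \ref{lem:two-colors} (or an equivalent structural bound on the $\{U_c\}$-arrangement), your proof does not go through.
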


\subsection{The dual setting: computing the maximum depth}
Till now we have described the MaxRS problem in the {\em primal}
setting. By appropriate scaling, we can assume that 
%the $Q$ $d$-box has unit length in each dimension and 
the $d$-ball $Q$ has unit radius. We now state the 
MaxRS problem in the {\em dual} setting. Replace 
each point $p\in P$ (with weight $w_p$) with 
a unit radius $d$-ball $B(p)$ (with weight $w_p$) 
centered at $p$. Let $\cB \leftarrow \bigcup_{p \in P}B(p)$ be the
collection of these $n$ weighted unit balls. 
Then the MaxRS problem for $Q$ being a 
$d$-ball is equivalent to the problem of finding the 
point in $\IR^d$ with the maximum {\em weighted depth} 
w.r.t. $\cB$.
%The weighted depth of a point $p$, say $\wdep_{\cB}(p)$, 
%is the 
%total weight of the balls in $\cB$ containing $p$.
Similarly, colored MaxRS in the dual setting will be a 
set $\cB$ of $n$ colored unit $d$-balls and the goal is 
to report the point in $\IR^d$ with the maximum
{\em colored depth}. The colored-depth of a point $p$, 
say $\cd_{\cB}(p)$, is the number of uniquely colored balls 
in $\cB$ that contain $p$. 

\subsection{An overview of our techniques}

MaxRS is amenable to sampling, i.e., in the dual setting a 
point in $\IR^d$ with large (resp., small) depth w.r.t. $\cB$ 
is expected to have a large (resp., small) depth wr.r.t. a 
uniform random sample in $\cB$. Prior works obtained a
$(1-\eps)$-approximation by exploiting this property.
The strategy works in two stages: 
ignoring the dependency on $\eps$ and assuming the value 
of $\OPT$ is known, first compute a sample of 
size roughly $O(\log n)$, and then run the exact algorithm on 
the sample. Unfortunately, for a $d$-ball in the
dynamic setting, this approach leads to a 
$(1-\eps)$-approximation algorithm with an 
update time that has exponential dependence on $d$ (i.e., 
$\log^{O(d)}n$), since exact MaxRS with a $d$-ball on $n$ points requires $O(n^d)$ time. 
In contrast, our first technique achieves an update time that 
does {\em not} have an exponential dependence on $d$ with respect to $\log n$.

Our first general technique which obtains a 
$(\frac{1}{2}-\eps)$-approximation follows a different approach.
Instead of sampling the input objects, the algorithm 
samples a bunch of points in $\IR^d$, maintains their 
(weighted or colored) 
depth w.r.t. $\cB$ and reports the maximum among them.
The algorithm is simple but the analysis entails a 
randomized game involving a set of fixed objects in $\IR^d$ 
and a random sample of points, and a ``volume'' argument
involving $d$-balls. This technique is used to 
obtain Theorem~\ref{thm:dyn-d-ball}, \ref{thm:stat-d-ball}, 
and \ref{thm:col-balls-half-approx}.

Our second technique provides a $(1-\eps)$-approximation 
guarantee for colored MaxRS problem (previous 
works gave a $(1-\eps)$-approximation 
guarantee for un-colored MaxRS). The algorithm works 
in two phases. In the first phase, we design an
exact output-sensitive algorithm, and in the second phase,
we speed up the exact algorithm by random sampling on 
colors. As an application, we consider colored MaxRS 
with $Q$ being a disk in $\IR^2$.
There is a straightforward $O(n^2\log n)$ time algorithm
to solve the problem. 
We first design an algorithm whose running time is proportional
to the value of $\OPT$. 
Specifically, the running 
time is $O(n\log n+n\cdot \OPT)$ (Theorem~\ref{thm:second-algo}). 
Using this output-sensitive algorithm,
we then design the $(1-\eps)$-approximation algorithm 
with $O_{\eps}(n\log n)$ running time (Theorem~\ref{thm:col-balls-1minus-eps-approx}). 
%In future, we plan to extend this technique to answer colored MaxRS for $3$-boxes and $3$-balls.

\subsection{Other related work}
The study of the MaxRS problem was initiated by the computational geometry community~\cite{imai1983finding, nandy1995unified, chazelle1986circle, bdp97, agarwal2002translating, de2009covering, jin2018near, zhang2025approximation}, and more recently it has attracted significant interest in the database community. In the last decade, several works have appeared that study MaxRS and its variants in different models and application settings. For example, the MaxRS problem has been extensively studied in the I/O-model~\cite{choi2012scalable, choi2014maximizing, tao2013approximate}, as well as in spatial data streams~\cite{AmagataH16, amagata2017general, MostafizMHAT17}. An index-based method has been proposed to support queries with varying parameters efficiently~\cite{zhou2016index-maxrs}. To address inherent uncertainty in location data, the probabilistic MaxRS problem has also been explored~\cite{nakayama2017probabilistic, liu2019probabilistic}, and more recently extended to dynamic (varying locations of objects over time) and trajectory data~\cite{liu2024sweepline, zhang2022maximizing, hussainEBDT17contMaxRS}. A different dynamic variant, where object weights decay over time, has been studied in~\cite{tahmasbi2022dynamic}. \cite{Hussain2018areaRS} studied MaxRS for optimizing area coverage of polygons. Variants of the MaxRS problem have also been considered in road networks~\cite{cao2014retrieving, xiao2011optimal, chen2015optimal, liu2016finding, ChenYL19, zhou2016index}, mobile ad hoc networks~\cite{nakayama2016mobile} and wireless sensor networks~\cite{hussain2015wireless}.

A closely related problem is the \emph{best region search} (BRS) problem introduced by~\cite{feng2016towards}, where the goal is to find an optimal rectangle of a specified size that maximizes a user-defined scoring function over the enclosed points. Extensions such as selecting the top-$k$ best regions have also been studied~\cite{skoutas2018efficient, shahrivari2020parallel}. Other region search problems have also been considered in the literature~\cite{feng2019similarregion, Feng2020bursty, liu2020DARS, chen2020optRS}. Some of the classical problems in computational geometry and databases, such as range searching~\cite{matouvsek1994geometric} and range aggregation~\cite{govindarajan2002crb, sheng2011new}, are related but fundamentally different from the MaxRS problem that we study in this paper (in range searching or range aggregation the placement of $Q$ is fixed and summary of $P\cap Q$ is the query).

\section{Preliminaries}\label{sec:prelim}

For any point $x\in\mathbb{R}^d$, we refer to its coordinates as $x_1,\dots,x_d$.  
Denote by $G_s(c)$ the uniform grid with cell side length $s$ 
and the boundaries are defined by $d$ families of hyperplanes.
For each $1\leq i \leq d$, the $i$-th family of hyperplanes
are 
$\{x\in\IR^d \mid x_i = c_i + k\cdot s, \text{ where } k\in\bbZ\}$ 
(perpendicular to the $i$-th dimension and separated by distance
$s$). 
%The point $c \in \IR^d$ is the {\em offset}.

For any point $p\in \IR^d$, consider the cell in 
$G_s(c)$ containing $p$ and let $p'$ be the center of the cell.
Then $p$ is defined to be {\em $\Delta$-near}
if the euclidean distance between $p$ and $p'$ is 
at most $\Delta$. The following is a simple lemma which states
that if we perform enough ``shifts'' of the grid, then
in at least one of the shifted grids, $p$ will be $\Delta$-near.

\begin{lemma}\label{lem:grid-shift} Consider a collection of grids $\cG = \left\{G_s\left(\frac{\Delta}{\sqrt{d}}z\right) \,\mid\, z\in \{0,1,\dots,\frac{s\sqrt{d}}{\Delta}-1\}^d \right\}$. For any point $p\in \IR^d$, there exists at least one grid in $\cG$ in which $p$ is $\Delta$-near.
\end{lemma}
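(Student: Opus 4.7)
The plan is to decompose the problem coordinate-by-coordinate: show that for each dimension $i$ one can pick the $i$-th offset from the allowed set $\{0,\tfrac{\Delta}{\sqrt{d}},\dots,s-\tfrac{\Delta}{\sqrt{d}}\}$ to make the $i$-th coordinate of the cell center close to $p_i$, and then combine the per-coordinate bounds via Pythagoras.

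First, I would fix $p\in\IR^d$ and a grid $G_s(c)$ and observe that the cell of $G_s(c)$ containing $p$ has center $p'$ with $p'_i=c_i+(k_i+\tfrac{1}{2})s$, where $k_i=\lfloor (p_i-c_i)/s\rfloor$. A one-line calculation then produces the clean form
\[
|p_i-p'_i| \;=\; \bigl|\bigl((p_i-c_i)\bmod s\bigr)-\tfrac{s}{2}\bigr|,
\]
so the per-coordinate error depends on $c_i$ only modulo $s$, which lets me work on the torus $\IR/s\bbZ$.

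Next comes the per-dimension argument. Treat the $\tfrac{s\sqrt{d}}{\Delta}$ allowed values of $c_i$ as equally spaced points on a torus of length $s$, with gap $\tfrac{\Delta}{\sqrt{d}}$. As $c_i$ ranges over this set, $(p_i-c_i)\bmod s$ also traces out an equally spaced set of points on the torus with the same gap, so the torus distance from $\tfrac{s}{2}$ to this set is at most half the gap, i.e., $\tfrac{\Delta}{2\sqrt{d}}$. Since $\tfrac{s}{2}$ is the midpoint of $[0,s)$, the torus distance and the absolute distance from $\tfrac{s}{2}$ to any point of $[0,s)$ coincide, so there is a $c_i$ in the allowed set with $|p_i-p'_i|\le\tfrac{\Delta}{2\sqrt{d}}$.

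Finally, because the allowed shifts form a product set, these coordinate-wise choices can be made independently. Picking the best $c_i$ in each dimension yields a single vector $c$ such that the grid $G_s(c)\in\cG$ satisfies, by the Pythagorean bound,
\[
\|p-p'\|_2 \;\le\; \sqrt{\sum_{i=1}^d\Bigl(\tfrac{\Delta}{2\sqrt{d}}\Bigr)^2} \;=\; \tfrac{\Delta}{2} \;\le\; \Delta,
\]
so $p$ is $\Delta$-near in $G_s(c)$. The only subtle point I expect is the torus-versus-absolute-distance identification used in the per-dimension step, together with the minor bookkeeping that $\tfrac{s\sqrt{d}}{\Delta}$ is treated as a positive integer; everything else is a routine product argument.
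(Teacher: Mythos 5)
Your proof is correct. The paper states Lemma~\ref{lem:grid-shift} without proof (it is presented as ``a simple lemma''), so there is no authorial argument to compare against; your coordinate-wise reduction to the torus $\IR/s\bbZ$, the observation that the torus distance from $\tfrac{s}{2}$ agrees with the absolute distance on $[0,s)$, the independent per-coordinate choice enabled by the product structure of the offset set, and the final Pythagorean combination are all sound. In fact your argument establishes the slightly stronger conclusion that $p$ is $\tfrac{\Delta}{2}$-near in some grid of $\cG$, which of course implies the stated claim.
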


\section{Technique 1: Sampling points in \texorpdfstring{$\IR^d$}{R3}} \label{sec:halfapprox}

In this section we present a general technique for obtaining fast approximations for the MaxRS problem for $d$-balls. We describe our
algorithms in the dual setting, i.e., we are given an input set of $n$ unit
balls $\cB$, and the goal is to find a point that has the maximum possible
depth. Our technique is largely based on the idea of sampling a small set of
points $S$, and estimating depth as the deepest point in this sample. We show
that it is sufficient to consider a sample of $O_\eps(n\log n)$ points to
obtain a $(\frac{1}{2}-\eps)$-approximation. This small sample size allows us to
avoid the exponential dependence of $\log n$ on $d$ for static algorithms,
and also allows us to get fast dynamic update times, since depth of only a small
number of points have to be modified for each update.

\subsection{Dynamic MaxRS with \texorpdfstring{$d$}{d}-ball}
\subsubsection{Algorithm.} In the dynamic setting, balls can be inserted to, or deleted from $\cB$. The goal of the algorithm
is to maintain a point of (approximate) maximum depth.
We first construct a collection of grids $\mathcal{G}$, by applying Lemma~\ref{lem:grid-shift} with
$s=\frac{2\eps}{\sqrt{d}}$ and $\Delta=\eps^2$.  We say a grid cell is {\em non-empty} if at
least one $d$-ball in $\cB$ intersects the cell.

\paragraph{Sampling step.}
 Pick a grid $G\in\mathcal{G}$.  
 For each non-empty cell $X$, let
$C(X)$ be the circumsphere of $X$. We sample a set $S_X$ of 
some $t=\Theta(\eps^{-2}\log |\cB|)$ points on
$C(X)$. Specifically, each point is sampled independently and uniformly at random from
$C(X)$~\cite{muller1959note, weisstein}. 
Iterate over all non-empty cells $X$ to get 
$S_G \leftarrow \bigcup S_X$. Finally,
iterate over all grids $G\in\mathcal{G}$, to obtain
the set  $S\leftarrow \bigcup S_G$.

\paragraph{Epochs and handling updates.}
%Our algorithm proceeds in epochs, depending on the number of balls in $\cB$ in any given round. 
%Consider the $j$-th epoch which starts with the set of 
%balls $\cB_j$, such that  $n:=n_j=|\cB_j|$. The $j$-th 
%epoch continues
%till the number of balls remains within $[n_j/2,2n_j]$.  

Our algorithm proceeds in epochs. At the beginning of the 
$j$-th epoch, let the balls in $\cB$ be denoted by $\cB_j$. The $j$-th 
epoch {\em ends}
when the number of balls in $\cB$ go outside the
range $[|\cB_j|/2,2|\cB_j|]$, and the $(j{+}1)$-th epoch starts. 

At the {\em start} of each
epoch, we do a sampling step on $\cB_j$
 to obtain a set of points $S_j$. We then compute
the weighted depth of each point in $S_j$. 
We will show in the analysis that this re-sampling, and
re-computation of depth takes only $O_\eps(|\cB_j|\log |\cB_j|)$ time.

During the {\em course} of an epoch, we need to handle
insertions and deletions of balls. 
Let $B$ be the unit $d$-ball inserted in the current round. For each cell, say $X$, intersected by
$B$, we have two cases. If $X$ is non-empty then the weighted depth of each point in $S_X \cap B$ is increased by $w_B$, the weight of ball $B$. Otherwise, if $X$ is empty, then sample a set $S_X$ of
$t=\Theta(\eps^{-2}\log |\cB_j|)$ points, and any point in $S_X$ contained in
$B$ will have weighted depth $w_B$. The case of deletion can be handled similarly.

\subsubsection{Analysis}
We will start the analysis 
by analyzing a randomized game (Lemma~\ref{lem:game-2-dim}) involving a fixed collection
of $\OPT$ balls and a random sample of points. 
We believe that this game can have wider applications.
For a closed set $Q\subseteq \mathbb{R}^d$, we will use $\partial{Q}$ to denote its boundary.

\begin{lemma}\label{lem:game-2-dim}
Let $C$ be a $d$-ball, and let $\cB' \subseteq \cB$
be a collection of $k$ weighted $d$-balls, such that
the $(d{-1})$-dimensional 
measure of $\partial{C} \cap B$ relative to the 
$(d{-1})$-dimensional 
measure of $C$ is $\frac{1}{2}
-\Theta(\eps)$, for all $B\in \cB'$\footnote{
In $\IR^2$, $C$ will be a disk and any disk in $\cB'$
will cover at least $\left(\frac{1}{2}-\Theta(\eps)\right)$
fraction of $C$'s circumference. 
We also remark that the lemma holds in a more general setting.}. 
Let $S=\{p_1,p_2,\ldots,p_t\}$ be a collection
of points such that each point in $S$ is independently and
uniformly at random picked from $\partial{C}$,
where $t=c\eps^{-2}\log n$ (for a sufficiently
large constant $c$). 
Then with high probability (i.e., $\geq 1-n^{-\Omega(1)}$) there 
exists a point in $S$ whose weighted depth is at least 
$\left(\frac{1}{2}-\Theta(\eps)\right)W$, where $W=\sum_{B\in 
\cB'} w_B$.
\end{lemma}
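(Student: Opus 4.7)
The plan is to reduce the lemma to a statement about a single random point on $\partial C$ and then amplify by independence of the $t$ samples. For $p \in \partial C$, let $f(p) := \sum_{B \in \cB'} w_B \,\mathbf{1}[p \in B]$ denote the weighted depth of $p$. When $p$ is drawn uniformly from $\partial C$, the hypothesis on the surface measure of $\partial C \cap B$ together with linearity of expectation yields
\[
  \Exp[f(p)] \;=\; \sum_{B \in \cB'} w_B \Pr[p \in B] \;\ge\; \left(\tfrac{1}{2} - c_1 \eps\right) W
\]
for some constant $c_1$ (the one hidden in the $\Theta(\eps)$ of the hypothesis). Thus a single random point on $\partial C$ already achieves, in expectation, nearly the depth we want; the rest of the proof converts this mean statement into a high-probability existence statement for the sample $S$.

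Next I would invoke a reverse-Markov argument to turn the mean estimate into a per-sample lower bound on the probability of exceeding a slightly weaker target. Fix a constant $c_2 > c_1$, set $A := (\tfrac{1}{2} - c_2 \eps) W$, and use the deterministic bound $f(p) \le W$ to split
\[
  \Exp[f(p)] \;\le\; A \cdot \Pr[f(p) < A] \,+\, W \cdot \Pr[f(p) \ge A].
\]
Rearranging and plugging in the lower bound on $\Exp[f(p)]$ gives
\[
  \Pr[f(p) \ge A] \;\ge\; \frac{(c_2 - c_1)\eps}{\tfrac{1}{2} + c_2 \eps} \;=\; \Omega(\eps),
\]
so each individual sample point clears the threshold $A$ with probability at least $\Omega(\eps)$.

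Finally, since $p_1, \dots, p_t$ are independent, the probability that no sample achieves depth $A$ is at most $\bigl(1 - \Omega(\eps)\bigr)^t \le \exp\!\bigl(-\Omega(\eps)\cdot c\eps^{-2}\log n\bigr) = n^{-\Omega(c/\eps)}$, which is $n^{-\Omega(1)}$ once $c$ is chosen to be a sufficiently large absolute constant. Hence with high probability some $p_i \in S$ has weighted depth at least $(\tfrac{1}{2} - \Theta(\eps))W$, as claimed. The only delicate point I anticipate is the reverse-Markov step: one must leave an $\Omega(\eps)$ gap between $\Exp[f(p)]$ and the threshold $A$ so that the per-sample success probability is genuinely $\Omega(\eps)$ rather than $o(\eps)$, and this extra slack must be absorbed into the $\Theta(\eps)$ appearing in the conclusion. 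Everything else is routine linearity of expectation and independence of the samples, and indeed a slightly tighter calculation would even make $t = \Theta(\eps^{-1}\log n)$ suffice, so the stated $t = \Theta(\eps^{-2}\log n)$ is comfortably sufficient.
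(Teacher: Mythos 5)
Your proof is correct, but it takes a genuinely different route from the paper's. The paper argues \emph{per ball}: for each $B\in\cB'$ it defines $Y_B$, the number of sample points landing in $B$, shows $\Exp[Y_B]\ge(\tfrac12-\Theta(\eps))t$, applies a Chernoff bound to get concentration for each $Y_B$, union-bounds over the $k\le n$ balls, and then observes that $\sum_{p\in S}\wdep(p)=\sum_{B}w_B Y_B\ge(\tfrac12-\Theta(\eps))Wt$, so by averaging some sample point must have the claimed depth. You instead argue \emph{per sample point}: linearity of expectation gives $\Exp[f(p)]\ge(\tfrac12-c_1\eps)W$ for a single uniform $p$, a reverse-Markov step (using the deterministic cap $f(p)\le W$, valid since the weights are non-negative) converts this into a per-sample success probability of $\Omega(\eps)$ for the slightly weaker threshold $(\tfrac12-c_2\eps)W$, and independence amplifies this to failure probability $(1-\Omega(\eps))^t=n^{-\Omega(1)}$. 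Your version buys two things: it needs no Chernoff bound and no union bound over $\cB'$ (so the guarantee does not depend on $k$ at all), and, as you note, it shows $t=\Theta(\eps^{-1}\log n)$ samples already suffice, improving on the paper's $\Theta(\eps^{-2}\log n)$ by a factor of $\eps^{-1}$. What the paper's route buys in exchange is the stronger structural statement that \emph{every} ball in $\cB'$ simultaneously contains a $(\tfrac12-\Theta(\eps))$ fraction of the sample, which is more than is needed for this lemma but is the kind of uniform guarantee that sometimes gets reused elsewhere; here it is not, so your argument is a clean simplification. The one step you flagged as delicate --- leaving an $\Omega(\eps)$ gap between the mean and the threshold so the success probability is genuinely $\Omega(\eps)$ --- is handled correctly, and the extra $c_2\eps$ slack is indeed absorbed into the $\Theta(\eps)$ of the conclusion.
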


\begin{proof}
We will use an indirect argument to establish the lemma.
Consider any ball $B \in \cB'$. 
We will first establish that 
with high probability 
at least $\left(\frac{1}{2}-\Theta(\eps)\right)t$ points 
of $S$ will lie inside $B$.
For $1\leq i \leq t$,
let $X_i$ be the indicator random variable which is 
equal to $1$ if $p_i \in B$. Let 
$Y_B=\sum_{i=1}^t X_i$ be the number of points in $S$ contained in $B$.
Then, $\mathbb{E}[Y_B] \geq
\left(\frac{1}{2}-\Theta(\eps)\right)t$.
Since $X_1,X_2,\ldots, X_t$ are independent random 
variables, via standard application of 
Chernoff bound, with 
high probability (w.h.p.),  
we have $Y_B \geq \left(1-\eps\right) 
\left(\frac{1}{2}-\Theta(\eps)\right)t  \geq 
\left(\frac{1}{2}-\Theta(\eps)\right)t$.
Next, by union bound on $k\leq n$ balls,
$Y_B \geq \left(\frac{1}{2}-\Theta(\eps)\right)t$ holds w.h.p. for all $B\in \cB'$.

Now we will connect the random variables $Y_B$
with the weighted depth of points in $S$.
Let $\wdep(p)$ be the weighted depth of any point 
$p \in S$ w.r.t. $\cB'$. 
Then, observe that 
$\sum_{p\in S}\wdep(p)=\sum_{B\in \cB'}w_BY_B$, which by the previous argument is at least $W\left(\frac{1}{2}-\Theta(\eps)\right)t$ w.h.p.
This finishes the proof.
%This establishes the fact that there exists a point $p\in S$
%with weighted depth at least $\left(\frac{1}{2}-\Theta(\eps)\right)W$.
\end{proof}

Motivated by the previous lemma, we now show that if a unit ball passes {\em close to} the center
of a ball of radius $\eps$, then almost half the ``surface area'' of the smaller ball is covered.

\begin{lemma}\label{lem:cap-area}
Let $B$ be a unit radius $d$-ball 
and $C$ be a $d$-ball of radius $\eps$. 
Suppose $\partial{B}$ passes through a point that 
lies at distance $\eps^2$ from the center 
of $C$. Then 
the $(d{-}1)$-dimensional measure
of $B \cap \partial{C}$ 
relative to the $(d{-}1)$-dimensional measure of 
$C$ is at least $\frac{1}{2}-\Theta(\eps)$.
\end{lemma}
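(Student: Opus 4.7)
The plan is to place coordinates with $o$, the center of $C$, at the origin, and the center $O$ of $B$ on the positive $x_1$-axis at distance $L := |O - o|$. Since $\partial B$ contains a point $q$ with $|q - o| = \eps^2$ and $|q - O| = 1$, the triangle inequality forces $L \in [1 - \eps^2,\, 1 + \eps^2]$. A point $x \in \partial C$ lies in $B$ iff $|x - O|^2 \leq 1$, and expanding this using $|x|^2 = \eps^2$ reduces it to the half-space condition $x_1 \geq c$, where $c := \frac{\eps^2 + L^2 - 1}{2L}$. Plugging in $L \in [1 - \eps^2,\, 1 + \eps^2]$ and noting that the denominator $2L$ is bounded away from $0$, one verifies directly that $|c| = O(\eps^2)$, and in particular $|c|/\eps = O(\eps)$.

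With this setup, I would reduce the lemma to a standard spherical-cap estimate. Parameterizing $\partial C$ by $x_1 = \eps u$ with $u \in [-1, 1]$, the cross-section of $\partial C$ at height $x_1$ is a $(d{-}2)$-sphere of radius $\eps\sqrt{1-u^2}$, so the $(d{-}1)$-dimensional surface measure on $\partial C$ pulls back to a density proportional to $(1-u^2)^{(d-3)/2}\, du$. Consequently the fraction of $\partial C$ that satisfies $x_1 \geq c$ equals
\[
\frac{\int_{c/\eps}^{1} (1-u^2)^{(d-3)/2}\, du}{\int_{-1}^{1} (1-u^2)^{(d-3)/2}\, du}.
\]
At $c/\eps = 0$ this fraction equals exactly $\frac{1}{2}$, and its derivative with respect to $c/\eps$ is bounded in absolute value by a constant depending only on $d$ (which is a constant throughout). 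Therefore shifting $c/\eps$ by $O(\eps)$ decreases the fraction by at most $O(\eps)$, giving the desired lower bound of $\frac{1}{2} - \Theta(\eps)$.

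I do not foresee a significant obstacle: once the coordinate system is fixed and the condition $x_1 \geq c$ is extracted, everything reduces to a one-variable calculus estimate that is uniform in the dimension (since $d$ is a fixed constant). The only minor care needed is to verify that $2L$ stays bounded away from $0$, which is immediate since $L \geq 1 - \eps^2 \geq 1/2$ for any $\eps \leq 1/\sqrt{2}$; this ensures that the bound $|c| = O(\eps^2)$ holds uniformly across the range of admissible $L$, and hence the conclusion $\frac{1}{2} - \Theta(\eps)$ is uniform in the hypothesis.
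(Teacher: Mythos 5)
Your proof is correct and follows essentially the same route as the paper's: both reduce the claim to a spherical cap of $\partial C$ cut off by a hyperplane at relative height $O(\eps)$ from the center, and bound its fractional area via the integral $\int (1-u^2)^{(d-3)/2}\,du$ (the paper invokes this as a cited cap-area formula for $d\ge 3$ and handles $d=2$ separately by an $\arccos$ computation). Your version is in fact slightly more self-contained and more general---the paper fixes the center of $B$ at distance exactly $1+\eps^2$ (the tangential configuration arising in its application), whereas you cover every $L\in[1-\eps^2,1+\eps^2]$ consistent with the hypothesis; the only nit is that for $d=2$ the density $(1-u^2)^{-1/2}$ is unbounded at $u=\pm 1$, so the ``bounded derivative'' claim should be stated only on the neighborhood of $u=0$ containing $c/\eps$, which is all the mean-value argument requires.
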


\begin{proof}
Via suitable rotations of the $d$-balls, we can 
assume that the center of $C$ is at the origin and 
the center of $B$ is $(0,0,\ldots,0,1+\eps^2)$. 
See Figure~\ref{fig:cap-area}(a).
 For any point $x = (x_1,\dots,x_d)\in \partial{B}\cap\partial{C}$, 
 $\Vert x\Vert = \eps^2$ and 
 $\Vert x-(0,\dots,0,1+\eps^2) \Vert = 1$. Expanding these
 two quantities, we get the following equations:
    \begin{equation*}
        \sum_{i=1}^{d}{x_i^2} = \eps^2 \quad\text{and}\quad
        \sum_{i=1}^{d-1}{x_i^2} + (x_d - 1 - \eps^2)^2 = 1,
    \end{equation*}
which solves to $x_d = \dfrac{3\eps^2 + \eps^4}{2+2\eps^2}$.
We denote the expression on the right by $b$. 
It can be verified that $\eps^2 \leq b \leq 2\eps^2$ 
for all $\eps\in (0,1)$.

In $d=2$, suppose $B\cap \partial{C}$ subtends an angle 
$2\theta$ at the center of $C$ (ref Figure~\ref{fig:cap-area}(a)). 
Since $\cos{\theta} = b/r$, the arc length of $B\cap 
\partial{C}$ relative to the circumference of $C$ is
\begin{align*}
       \frac{2\theta r}{2\pi r}=\frac{1}{\pi}\arccos{\bigg(\frac{3\eps + \eps^3}{2+2\eps^2}\bigg)}
       \geq \frac{1}{\pi}\arccos{(2\eps)}
       \geq \frac{1}{2} - 2\eps,
\end{align*}
for all $0 < \eps < 1/2$.

For $d\geq 3$, we use a known result of computing the surface area of a spherical cap as follows.
Let $h$ denote the height of the spherical cap $B\cap \partial{C}$ determined by the hyperplane $x_d = b$. Define $q = 1 - h/r = b/r = \Theta(\eps)$, since $b = \Theta(\eps^2)$. By an estimate of \cite{chudnov1986minimax, wiki:sphericalcap}, the surface area of $B\cap \partial{C}$ relative to the surface area of $C$ is given by
    \begin{equation*}
        \frac{1}{2} - \frac{G_{d-2}(q)}{2\,G_{d-2}(1)},
    \end{equation*}
    where $G_d(x) = \int_{0}^{x}{(1-t^2)^{(d-1)/2}}\,dt$ for all $x\in[0,1]$. The quantity $G_{d-2}(q)$ can be trivially
    upper bounded by $q=\Theta(\eps)$, and the quantity $G_{d-2}(1)$ 
    can be lower bounded by 
    $\int_{0}^{1}{(1-t^2)^{(d-3)/2}}\,dt \geq 
    \int_{0}^{1/2}{(3/4)^{(d-3)/2}}\,dt =\Omega(1)$.
This finishes the proof.
\end{proof}

\begin{figure}[!ht]
    \centering
\begin{subfigure}{0.45\textwidth}
    \centering
    \includegraphics[width=0.7\textwidth, trim={0 0 0 4cm}, clip]{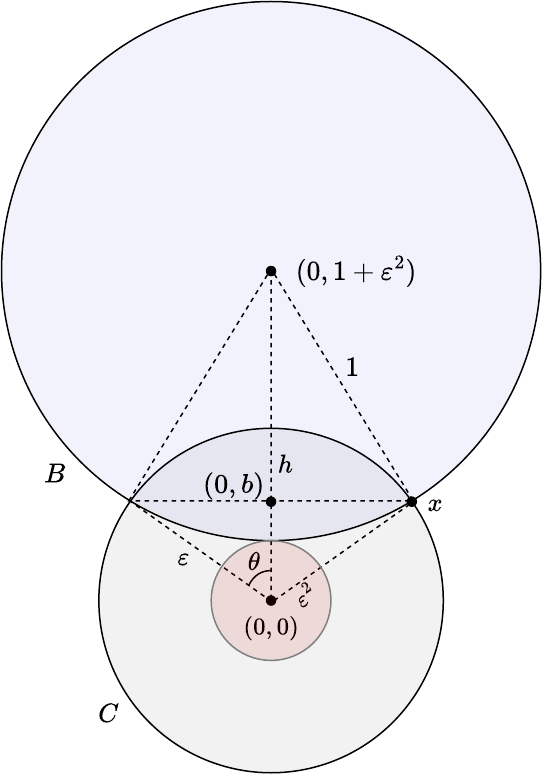}
    %\caption{\tcr{some caption?.}}
    \label{fig:cap-area-a}
\end{subfigure}
\begin{subfigure}{0.45\textwidth}
    \centering
    \includegraphics[width=0.7\textwidth, trim={0 0 0 4cm}, clip]{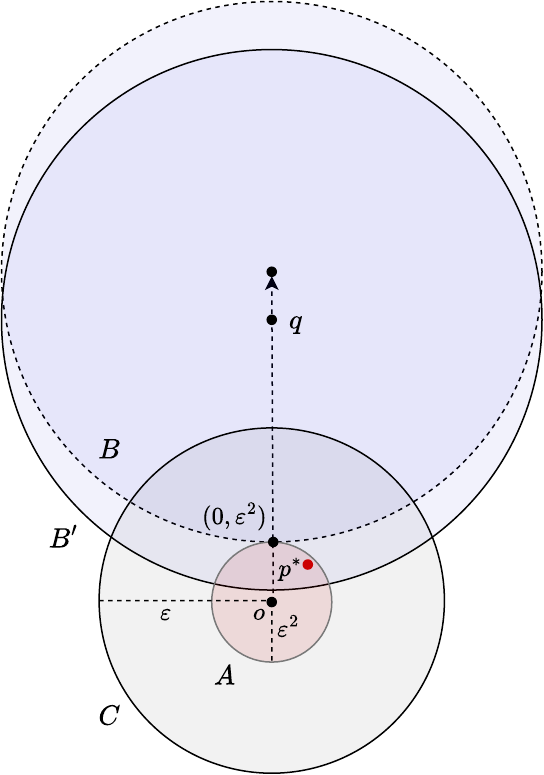}
    %\caption{\tcb{Shifting the ball $B'$.}}
    \label{fig:cap-area-b}
\end{subfigure}
    \caption{(a) Intersection between a unit disk $B$ centered at $(0,1+\varepsilon^2)$ and a circle $C$ centered at $(0,0)$ of radius $\varepsilon$. 
    (b) Shifting the ball
    $B'$ towards $q$.}
    \label{fig:cap-area}
\end{figure}

Using the above two lemmas, we will now establish the 
approximation factor of our algorithm.

\begin{lemma}
With high probability, the approximation factor of the algorithm is $\frac{1}{2}-\eps$.    
\end{lemma}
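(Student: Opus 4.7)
\smallskip
\noindent\textbf{Proof plan.} Fix a point $q^* \in \IR^d$ realizing the optimum depth, so $\wdep(q^*) = \OPT$, and let $\cB^* = \{B \in \cB : q^* \in B\}$, so $\sum_{B \in \cB^*} w_B = \OPT$. Applying Lemma~\ref{lem:grid-shift} with $s = 2\eps/\sqrt{d}$ and $\Delta = \eps^2$ yields a grid $G \in \cG$ in which $q^*$ is $\eps^2$-near some cell $X^*$; let $o^*$ denote the center of $X^*$ and let $C^*$ denote its circumsphere, which is a $d$-ball of radius exactly $\eps$ centered at $o^*$ and satisfies $|q^* - o^*| \leq \eps^2$. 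Because every ball in $\cB^*$ intersects $X^*$, the cell is non-empty, so the sampling step places a set $S_{X^*} \subseteq S$ of $t = \Theta(\eps^{-2}\log n)$ i.i.d.\ uniform points on $\partial C^*$.

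The main geometric step is to verify the hypothesis of Lemma~\ref{lem:game-2-dim} on $\cB^*$ and $C^*$: every $B \in \cB^*$ should cover at least a $(\tfrac{1}{2} - \Theta(\eps))$-fraction of the $(d{-}1)$-dimensional measure of $\partial C^*$. Lemma~\ref{lem:cap-area} would give this if $\partial B$ passed within $\eps^2$ of $o^*$, but the hypothesis only tells us that the \emph{interior point} $q^* \in B$ is that close to $o^*$. I would close this gap by a translation. Let $o_B$ denote the center of $B$, set $\hat w = (o_B - q^*)/|o_B - q^*|$, and define $B''$ as the unit ball centered at $o_{B''} = o_B + (1 - |o_B - q^*|)\hat w$. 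By construction $|o_{B''} - q^*| = 1$, so $q^* \in \partial B''$, and Lemma~\ref{lem:cap-area} applied to $B''$ and $C^*$ yields the desired $(\tfrac{1}{2} - \Theta(\eps))$-coverage of $\partial C^*$ by $B''$. It then suffices to prove the monotonicity $B \cap \partial C^* \supseteq B'' \cap \partial C^*$: the perpendicular bisector of $o_B$ and $o_{B''}$ is orthogonal to $\hat w$ and is at distance $(1 + |o_B - q^*|)/2 \geq 1/2$ from $q^*$, so provided $\eps$ is below a suitable absolute constant the whole tiny ball $C^*$ (radius $\eps$, center within $\eps^2$ of $q^*$) sits in the half-space containing $o_B$; any point of $\partial C^*$ that belongs to $B''$ is then strictly closer to $o_B$ than to $o_{B''}$ and hence, having distance at most $|o_{B''}| = 1$ from $o_B$, also lies in the unit ball $B$.

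Once the hypothesis is established, applying Lemma~\ref{lem:game-2-dim} to $\cB^*$, $C^*$ and $S_{X^*}$ delivers, with probability at least $1 - n^{-\Omega(1)}$, a point $p \in S_{X^*}$ whose weighted depth with respect to $\cB^*$ is at least $(\tfrac{1}{2} - \Theta(\eps))\,\OPT$; this depth is only larger with respect to the full collection $\cB \supseteq \cB^*$, and since the algorithm reports the maximum depth over all of $S \supseteq S_{X^*}$, its output is at least $(\tfrac{1}{2} - \Theta(\eps))\,\OPT$. Within an epoch we have $|\cB| \in [|\cB_j|/2, 2|\cB_j|]$, so $\log |\cB_j| = \Theta(\log n)$ and the Chernoff bound embedded in Lemma~\ref{lem:game-2-dim} continues to yield high probability in the current $n$; rescaling $\eps$ by a constant factor absorbs the $\Theta(\eps)$ slack into the claimed $(\tfrac{1}{2} - \eps)$-approximation. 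The principal obstacle is precisely the monotonicity claim above --- bridging ``$B \ni q^*$'' with the stricter hypothesis of Lemma~\ref{lem:cap-area} without losing any appreciable measure on $\partial C^*$ --- since everything else reduces directly to already-established lemmas.
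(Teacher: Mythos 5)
Your proposal is correct and follows essentially the same route as the paper's proof: shift to the grid where the optimum is $\eps^2$-near, take the radius-$\eps$ circumsphere of its cell, translate each covering unit ball outward so that Lemma~\ref{lem:cap-area} applies, argue the cap only shrinks under this translation, and then invoke Lemma~\ref{lem:game-2-dim}. The only differences are cosmetic --- you anchor the translation at $q^*$ (stopping when $q^*\in\partial B''$) rather than at the cell center (stopping at tangency with the $\eps^2$-ball), and you spell out via the perpendicular bisector the containment $B''\cap\partial C^*\subseteq B\cap\partial C^*$ that the paper merely asserts.
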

\begin{proof}
Consider the grid in $\mathcal{G}$ in which 
the optimal point $p^*$ is $\eps^2$-near 
(guaranteed by Lemma~\ref{lem:grid-shift}).
Let $X^*$ be the cell in the grid containing $p^*$.
W.l.o.g., assume that the center of $X^*$ is at the 
origin $o$. Consider a $d$-ball $A$ of radius $\eps^2$
centered at the origin. Then $p^*$ will lie inside $A$
(see Figure~\ref{fig:cap-area}(b)).
%Now $p^*$ lies inside a
%ball $A$ of radius $\eps^2$, concentric with the grid cell %$X^*$.
%the $d$-cube of side length
%$s/a$ concentric with the grid cell $X^*$ of side length $s = 2\eps/\sqrt{d}$.
% The circumsphere $A$ of this cube has radius $\eps/a$ and
The circumsphere $C := C(X^*)$ of the grid cell has radius $\eps$. Let
$B'\in \cB$ be a ball containing $p^*$ with center $q$.
Translate $B'$ in the direction $\overrightarrow{oq}$ to the ball $B$ until
it is tangential to $A$.
In this case, the spherical cap $B\cap \partial{C} \subseteq B'\cap \partial{C}$,
and therefore, the $(d{-}1)$-dimensional measure of the cap can only decrease. This is the setting considered in Lemma~\ref{lem:cap-area}, which implies that the
minimum measure is at least $\frac{1}{2}-\Theta(\eps)$ relative to the
measure of $C$. We then invoke Lemma~\ref{lem:game-2-dim} on
the ball $C$ and the set of balls $\cB'$ which cover $p^*$,
%the grid cell $X^*$ and the point $p^*$
which shows the existence of a randomly sampled
point in $\partial{C}$ with weighted depth at least $\texttt{opt}(\frac{1}{2} - \Theta(\eps))$ with high probability.
Scaling $\eps$ appropriately gives the desired approximation factor.
\end{proof}

\begin{lemma} \label{lem:updatetime}
The amortized update time of the algorithm is $O(\eps^{-2d-2}\log|\cB|)$.  
\end{lemma}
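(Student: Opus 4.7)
The plan is to separate the amortized per-update cost into two contributions: the work done for a single insertion or deletion during an epoch, and the amortized contribution from the rebuild performed at the start of each epoch. In both, the key quantity is the number of (grid, cell) pairs that a single unit $d$-ball can touch, multiplied by the per-cell sample size $t = \Theta(\eps^{-2}\log|\cB_j|)$.

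First, I would count the (grid, cell) pairs that a single unit ball can touch. From the construction in Lemma~\ref{lem:grid-shift} with $s=2\eps/\sqrt{d}$ and $\Delta=\eps^2$, the collection $\mathcal{G}$ contains $(s\sqrt{d}/\Delta)^d = (2/\eps)^d = O(\eps^{-d})$ grids. In any one grid, a unit $d$-ball has diameter $2$ while each cell has side $\Theta(\eps)$, so the ball meets at most $(2/s+1)^d = O(\eps^{-d})$ cells of that grid. Multiplying, one inserted or deleted ball touches $O(\eps^{-2d})$ (grid, cell) pairs in total.

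Next I would bound the in-epoch update cost. For each of the $O(\eps^{-2d})$ touched cells $X$ we do $O(t)$ work: if $X$ is already non-empty we walk through the existing $S_X$ and bump the depth of every point that lies in the new ball $B$; if $X$ was empty we first draw a fresh sample of size $t$ on $\partial C(X)$ (each sample in $O(1)$ time for constant $d$) and then initialize its depth. Either way the per-update work is $O(\eps^{-2d})\cdot O(\eps^{-2}\log|\cB_j|) = O(\eps^{-2d-2}\log|\cB_j|)$. For the epoch rebuild, the total number of non-empty cells across all of $\mathcal{G}$ is at most $|\cB_j|\cdot O(\eps^{-2d})$, so sampling produces $O(|\cB_j|\eps^{-2d-2}\log|\cB_j|)$ points, and all of their weighted depths can be computed within the same time budget by reinserting each of the $|\cB_j|$ balls once. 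Since by construction an epoch spans $\Omega(|\cB_j|)$ operations before $|\cB|$ leaves $[|\cB_j|/2,\,2|\cB_j|]$, the rebuild amortizes to $O(\eps^{-2d-2}\log|\cB_j|)$ per update; because $|\cB|=\Theta(|\cB_j|)$ throughout the epoch we may replace $\log|\cB_j|$ by $\log|\cB|$. Summing the two contributions yields the claimed bound.

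The main obstacle I expect is not the grid/sample counting (which is a direct volumetric argument), but supporting the max-depth query efficiently enough to stay within the stated bound. A naive max-heap over the $O(|\cB|\eps^{-2d-2}\log|\cB|)$ active sample points would nominally add an $O(\log|\cB|)$ factor to every depth change; to match the lemma one should either bucket points by (rounded) weighted depth and track a pointer to the highest non-empty bucket, or compute the answer only when a query is issued. I anticipate the authors to dispose of this with an implementation remark rather than a substantive argument.
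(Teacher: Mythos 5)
Your proposal matches the paper's proof essentially step for step: the same count of $O(\eps^{-d})$ grids times $O(\eps^{-d})$ cells per ball times $t=\Theta(\eps^{-2}\log|\cB_j|)$ sample points for the in-epoch cost, and the same charging of the $O_{\eps}(|\cB_j|\log|\cB_j|)$ epoch rebuild to the $\Omega(|\cB_j|)$ updates that an epoch must contain. Your closing worry about maintaining the running maximum is a fair implementation point that the paper simply does not address, but it does not affect the correctness of the amortized bound as stated.
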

\begin{proof}
Consider the $j$-th epoch.
First consider the cost of an insertion or deletion 
of a ball. In any fixed grid $G \in \mathcal{G}$, a unit ball intersects at most
    \(
      \left(2/s\right)^d
      = \left(\frac{2\sqrt{d}}{\eps}\right)^d
      = O(\eps^{-d})
    \)
     cells and the collection $\mathcal{G}$ contains
    \(
      \left(\frac{s\sqrt{d}}{\Delta}\right)^d
      = O(\eps^{-d})
    \)
    grids. Therefore, the number of sample points in $S$ updated is \(
      O(\eps^{-2d}) \times O(\eps^{-2}\log |\cB_j|)
      = O(\eps^{-2d-2}\log |\cB_j|)
    \). Hence each update costs
    \(
      O_{\eps}(\log |\cB_j|).
    \)

We now consider the sampling step that will happen
in the $(j{+1})$-th epoch. The number of points sampled
will be $\Oe(|\cB_{j+1}|\log|\cB_{j+1}|)$. Inserting
the balls in $\cB_{j+1}$ one-by-one, and updating the 
weighted depth of $\Oe(\log |\cB_j|)$ sampled points per ball, takes
$\Oe(|\cB_{j+1}|\log|\cB_{j+1}|)$ time.
We will charge the cost of this sampling step to the 
updates which happened in the $j$-th epoch.
Since $|\cB_j|=\Theta(|\cB_{j+1}|)$ and the number of 
updates in the $j$-th epoch are at least $|\cB_j|/2$, 
each update in the $j$-th epoch is charged only 
$\Oe(\log |\cB_j|)$.
\end{proof}

% In each grid shift, the algorithm maintains the colored depth of $O(\log{n})$
% sample points on the surface of each circumsphere in $C(\mathcal{X})$.
% Each input ball can intersect, and hence, affect the depth of sampled
% points in the neighboring $(2/s)^d = O(\eps^{-d})$ grid cells, which implies
% that $|C(\mathcal{X})| = O(\eps^{-d}n)$. For each sampled point, it takes $O(1)$
% time to compute its colored depth. The colored depth of all sampled
% points can thus be computed in $O(\eps^{-d}\,n\log{n})$ time. The algorithm
% iterates over $a^d$ grid shifts, and setting $a = 1/\eps$ as in the proof
% of Lemma~\ref{lem:cap-area} gives the desired runtime.

\subsection{Colored MaxRS with \texorpdfstring{$d$}{d}-ball} 
In the dual version of the colored MaxRS problem, we are given a set $\cB$ of $n$ colored unit balls,
and we have to find a point that is covered by maximum number of distinctly colored balls.
Our algorithm is very similar to the earlier described dynamic algorithm. We construct a collection of
grids $\mathcal{G}$, and sample $O_{\eps}(\log n)$ points for each non-empty cell. Only the depth computation step differs slightly.

Let $S$ be the set of sampled points. For each point $p \in S$ maintain a
temporary flag (initialized to $-1$), which records the most recent color
that contributed to its coverage.
To compute the colored depth of each point in $S$, we process the input
balls grouped by their color. Specifically, we order the set $\mathcal{B}$
by color index using any sorting algorithm. Now iterate over the input
balls in this order, updating the colored depth of points corresponding to
cells which intersect the particular ball.

Let $\mathcal{B}_j$ denote the set of balls of color $j \in [m]$. For
each ball in $\mathcal{B}_j$, if a point $p \in S$ is found to lie
inside the ball, and the temporary flag of $p$ is not already equal to
$j$, then we update the flag to $j$, and increment the colored depth
$\cd_{\mathcal{B}}(p)$ by 1. This ensures that the colored depth counts
the number of distinct colors, for which the point is covered by at least one
ball.

% Let $S$ be the set of sampled points. To maintain the colored depth of each
% point in $S$, we process the input balls grouped by their color.
% Specifically, we order the set $\mathcal{B}$ by color index using any sorting algorithm.
% Now iterate over the input balls in this order, updating the colored depth of points corresponding to cells which intersect the particular ball.
% Let ${\mathcal{B}_j}$ denote the set of balls
% of color $j\in [m]$. Each point $p\in S$ maintains a flag, which indicates
% whether it is covered by any ball in $\mathcal{B}_j$. This flag is
% initialized to $0$ before processing the balls of color $j$, and is set to
% $1$ if the point is found to be covered by a ball in $\mathcal{B}_j$. The
% colored depth $\cd_\cB(p)$ is the number of times the flag associated
% with $p$ is set to $1$.

% \begin{lemma} \label{lem:col_runtime}
%     There is a $(\frac{1}{2}-\eps)$-approximation algorithm for colored MaxRS for $d$-balls with $O_\eps(n\log n)$ time.
% \end{lemma}
\colhalfapprox*
\begin{proof}
    Every unit $d$-ball, intersects $O(\eps^{-2d})$ cells across all grids in $\mathcal{G}$ (as shown earlier in proof of Lemma~\ref{lem:updatetime}),
    and for each cell we maintain a sample of $O(\eps^{-2}\log n)$ points.
    Hence corresponding to each ball, there are $O(\eps^{-2d-2}\log n)$ points that need to be considered.
    Sorting the set of balls takes $O(n\log n)$ time.
    Since corresponding to each ball, we update the colored depth of $O_\eps(\log n)$ points,
    and each update takes $O(1)$ time, we get an overall running time of $O(\eps^{-2d-2}\log n)$.
\end{proof}

%%%%%%%%%%%%%%%%%%%%%
%(1-\eps)-approximation technique
%%%%%%%%%%%%%%%%%%%%%

\section{Technique 2: Output-sensitivity and color sampling} \label{sec:outsens-colorsample}

In this section we present a $(1-\eps)$-approximation algorithm for 
the colored disk MaxRS problem in $\IR^2$ (Theorem~\ref{thm:col-balls-1minus-eps-approx}).
We have already shown in Section~\ref{sec:prelim} that this problem 
is equivalent to finding a point of maximum
colored depth in a given set of colored unit disks.
So, the input to our problem is a set $\disks$ of $n$
unit disks, each of which is assigned a color from the set $[m]$. 
The goal is to find a point $p$ such that 
$\cd_\disks(p)\ge(1-\eps)\OPT$ (the notation $\cd_{\disks}(p)$ 
refers to the colored depth of a point 
$p$ w.r.t. a set of disks $\disks$).

\subsection{Preliminaries}    
An $x$-monotone arc is a planar curve that is intersected by any vertical line in at most one point. Consider a set 
$\arcs$ of circular $x$-monotone arcs with $k$ 
intersection points. A 
{\em vertical decomposition} of $\arcs$ 
partitions the plane into disjoint cells where each 
cell is a ``pseudo-trapezoid'' with its top and bottom 
segment being a portion from an arc in $\arcs$. The vertical
decomposition is performed in three steps.
First, enclose the arcs inside a large enough bounding box.
Next, shoot a vertical ray upwards and downwards from the 
endpoints of each arc in $\arcs$ and the $k$ intersection 
points. Finally, each ray travels 
until it first hits another arc in $\arcs$
or the top or the bottom of the bounding box. 
See Figure~\ref{fig:transform}(c) for an example.  
The resulting planar subdivision is called the 
{\em trapezoidal map}. We will need the following result.

\begin{lemma}\label{lem:arcs-arrangement}(Theorem 2 in \cite{mulmuley1991fast})
The trapezoidal map a set of $n$ circular $x$-monotone arcs 
with $k$ intersection points can be computed in $O(n\log n + k)$
expected time. The map has $O(n+k)$ vertices, edges and faces.
\end{lemma}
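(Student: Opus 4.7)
The statement is cited as Theorem~2 of Mulmuley~\cite{mulmuley1991fast}, so my plan is to sketch the standard randomized incremental construction (RIC) argument that proves it, and flag where Mulmuley's sharper analysis is needed to avoid an extraneous $\log$ factor.

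First I would establish the combinatorial bound. In the trapezoidal map each pseudo-trapezoid is bounded above/below by pieces of arcs and on the sides by vertical segments that emanate from either an arc endpoint or an arc-arc intersection. There are $2n$ arc endpoints and $k$ intersection points, hence $O(n+k)$ vertical walls, and the resulting planar subdivision has $O(n+k)$ vertices, edges and faces by an Euler-formula count on the underlying planar graph (each arc contributes $O(1)$ vertices of its own, and each intersection splits a pair of arcs into constantly many new pieces).

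Next I would describe the RIC. Permute the arcs $a_1,\dots,a_n$ uniformly at random and insert them one at a time, maintaining the trapezoidal map $\mathcal{T}_i$ of $\{a_1,\dots,a_i\}$ together with a history DAG that supports point location in $\mathcal{T}_i$ in $O(\log i)$ expected time for a query point. To insert $a_{i+1}$, I would locate its left endpoint using the history DAG, then walk rightwards through the sequence of trapezoids pierced by $a_{i+1}$, splitting each such trapezoid and introducing new vertical walls at the arc endpoints and at every intersection $a_{i+1}\cap a_j$ with $j\le i$. The per-insertion update cost is therefore $O(\log i)$ for the initial point location plus $O(1)$ for every trapezoid destroyed. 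Summing over $i$ gives the location cost $O(n\log n)$; the remaining work is proportional to the total number of trapezoids ever created.

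The main obstacle is bounding the latter by $O(n+k)$ in expectation rather than $O((n+k)\log n)$. The natural Clarkson--Shor analysis bounds the expected number of trapezoids destroyed in step $i+1$ by the expected ``structural change,'' and naive backwards analysis charges each intersection a $\Theta(\log n)$ factor. Here one invokes Mulmuley's refined bound for arrangements of curves of constant description complexity: by a more careful backwards-analysis argument that conditions on the identity of the trapezoid containing a given vertex of the final arrangement, the expected total number of trapezoids created across all insertions is $O(n+k)$. Combining the two contributions yields the claimed $O(n\log n + k)$ expected running time, and by Markov plus the standard Las Vegas-to-Monte Carlo trick (restart if work exceeds a constant times the expectation) the bound holds with high probability as well. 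I would cite \cite{mulmuley1991fast} for the delicate step and otherwise keep the proof at the sketch level above.
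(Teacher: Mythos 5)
The paper does not prove this lemma at all---it is imported verbatim as Theorem~2 of \cite{mulmuley1991fast} and used as a black box in Lemma~\ref{lem:first-algo}---so there is no in-paper argument to compare against; what you have written is a reconstruction of the standard proof of the cited result, and as such it is correct in outline. Your combinatorial count via Euler's formula and the randomized incremental construction with a history DAG (locate the left endpoint in $O(\log i)$ expected time, then walk through the destroyed trapezoids) is the standard route, and it does yield $O(n\log n + k)$ expected time. One inaccuracy worth flagging: the ``main obstacle'' you describe is not really there. The Clarkson--Shor backwards analysis bounds the expected number of trapezoids created at step $i$ by $O\bigl(1 + k\,i/n^2\bigr)$ (since a random sample of $i$ arcs has expected $\Theta(k\,i^2/n^2)$ intersections, and each arc is equally likely to be the last inserted), and summing over $i$ gives $O(n+k)$ directly---no $\Theta(\log n)$ factor per intersection arises, and no special refinement of Mulmuley's is needed for that step; the only $\log$ factor in the whole algorithm comes from the $n$ point locations. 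Your closing high-probability remark is also unnecessary, as the lemma claims only an expected-time bound. Since the authors treat this as an external citation, the cleanest course for the paper is to leave it as such; your sketch would be a reasonable appendix if self-containment were desired, after correcting the mischaracterization of where the logarithmic cost lives.
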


\subsection{The first algorithm}

\paragraph{Transformation to an uncolored problem.}
For a collection $\cD'$ of unit disks, let $\cU(\cD')$ denote the 
{\em union} of the disks $\cD'$, i.e., the set of points in the 
plane which are covered by at least one disk in $\cD'$.
Each $\cU(\cD')$ will be a (possibly disconnected) planar region, which may contain holes.
For each color $c\in[m]$, let $U_c=\cU(\cD_c)$, where $\cD_c$ is the collection
of all disks of color $c$.
The boundary of $U_c$, denoted $\partial{U_c}$, is composed of circular arcs from the corresponding disks of color $c$. 
The depth of a point $p$
in this context is the number of $U_i$'s, for all $1\leq i \leq m$, 
containing $p$. See Figure~\ref{fig:transform}(c). 

We can now transform the original problem into the following
new problem: find a point $p \in \IR^2$ that maximizes its 
(uncolored) depth with respect to the set of regions $\{U_1, \dots, U_m\}$. 

\begin{figure}[h]
\begin{subfigure}{0.31\textwidth}
    \centering
    \includegraphics[scale=1.3,page=1]{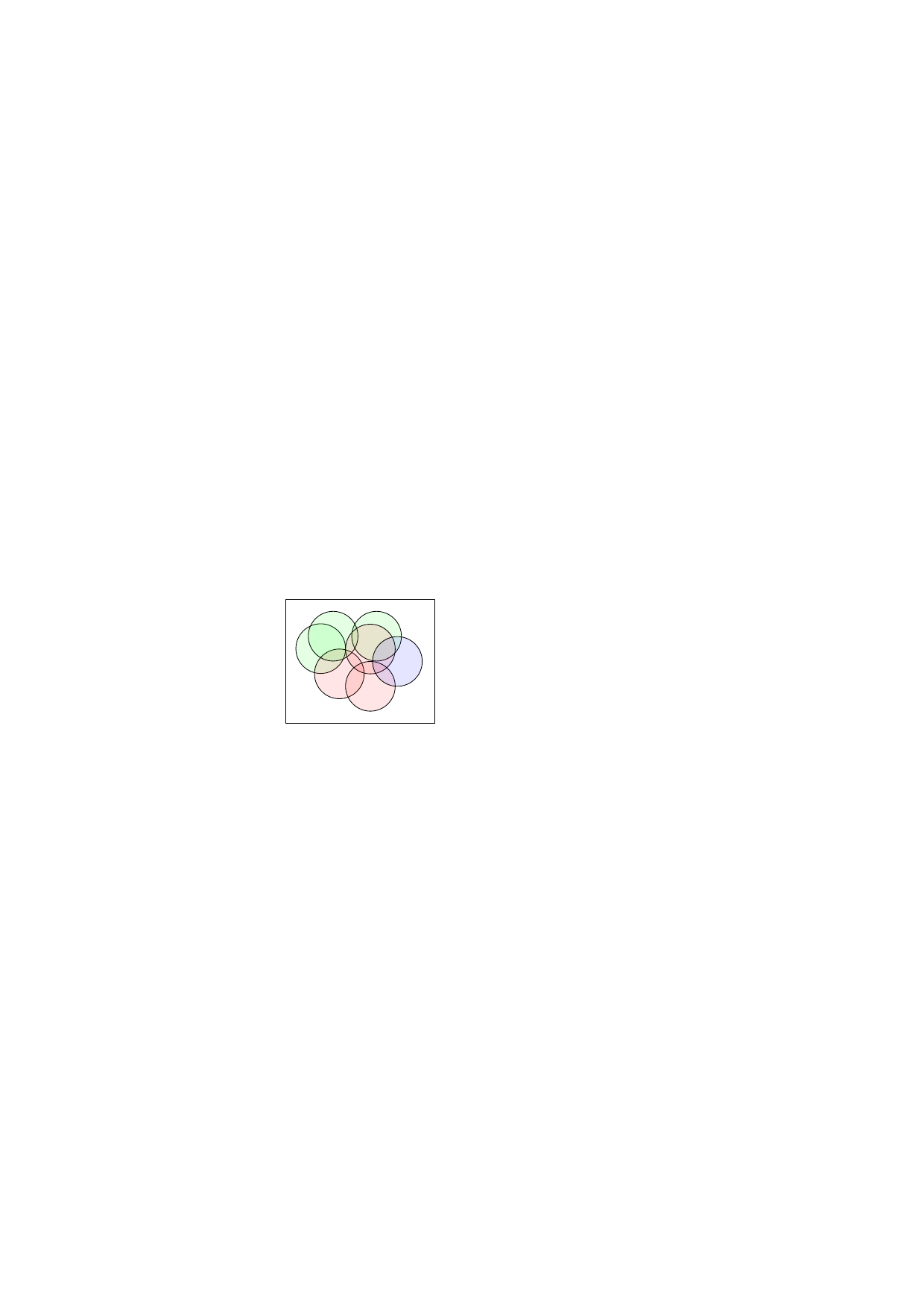}
    \caption{Colored input disks.}
\end{subfigure}
\begin{subfigure}{0.31\textwidth}
    \centering
    \includegraphics[scale=1.3,page=2]{boundary-disks}
    \caption{Merged objects $U_c$.}
\end{subfigure}
\begin{subfigure}{0.31\textwidth}
    \centering
    \includegraphics[scale=1.3,page=3]{boundary-disks}
    \caption{Cells in trapezoidal map.}
\end{subfigure}
 \caption{Reduction from max colored depth in disks to max (uncolored) depth among $U_i$'s.}
 \label{fig:transform}
\end{figure}

\paragraph{Traversal of the trapezoidal map.} 
For each color $c \in [m]$, compute $U_c$ and
determine the circular arcs forming
the boundary $\partial U_c$ \cite{aur88} (see Figure~\ref{fig:transform}(b)).
Each circular arc can be broken into two $x$-monotone arcs
by breaking it at the point with the extreme $x$-coordinate.
Next, apply Lemma~\ref{lem:arcs-arrangement} to compute the trapezoidal map of the circular $x$-monotone arcs 
(see Figure~\ref{fig:transform}(c)).

Now we will traverse the trapezoidal map to compute the 
depth of each cell and identify the cell with the maximum 
depth. Any point inside this cell can be reported as the 
final output.
For each cell $C$ in the trapezoidal map, maintain a variable 
$d_C$
denoting its depth. We will perform a breadth-first search type
traversal of the map starting 
from a cell with depth zero (any cell near the boundary works).
Let $C$ be the current cell visited and 
let $e$ be the common edge when traversing to an adjacent
cell $C'$. If $e$ belongs to one of the vertical rays shot, then
$d_C=d_{C'}$. Otherwise, let $D$ be the disk 
corresponding to edge $e$. If $C\in D$ and $C'\not\in D$, set 
$d_{C'}=d_{C}-1$; else, set $d_{C'}=d_{C}+1$. 

\begin{restatable}{lemma}{firstalgo}\label{lem:first-algo}
    Let $\disks$ be a collection of $n$ colored disks in the plane.
Given regions $U_1,\ldots, U_m$ corresponding to 
each color, define $k$ to be the number of points such that 
the boundaries of any two regions intersect.
Then there is an algorithm to answer colored disk MaxRS 
in $O(n\log n + k)$ expected time. 
\end{restatable}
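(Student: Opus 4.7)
The plan is to account for the three phases of the algorithm described above separately: (i) computing the union regions $U_c$ and extracting their boundary arcs; (ii) building the trapezoidal map of these arcs; and (iii) running the BFS that propagates depths across adjacent cells.

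For phase (i), I would invoke the standard fact (\cite{aur88}) that the union of $n_c$ unit disks has boundary complexity $O(n_c)$ and can be computed in $O(n_c \log n_c)$ time. Summing over colors using $\sum_{c=1}^{m} n_c = n$ and the concavity of $x \log x$ gives $O(n \log n)$ total time and $O(n)$ arcs across all $\partial U_c$. Splitting each circular arc at its $x$-extremum into at most two $x$-monotone pieces preserves the $O(n)$ bound on the arc count.

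For phase (ii), the key observation is that two arcs lying on the same boundary $\partial U_c$ can only share endpoints and never cross transversally; hence every transversal intersection among the $x$-monotone arcs is an intersection between boundaries of two distinct regions $U_c$, $U_{c'}$, and by hypothesis there are exactly $k$ such points. Feeding the $O(n)$ arcs and their $k$ crossings into Lemma~\ref{lem:arcs-arrangement} yields the trapezoidal map in $O(n \log n + k)$ expected time, with total combinatorial size $O(n + k)$.

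For phase (iii), the BFS visits every cell once and spends $O(1)$ work per traversed edge, giving $O(n + k)$ time. The nontrivial part of the proof is verifying correctness of the depth propagation rule, and this is where I expect the main obstacle to lie. Starting from an unbounded cell at depth $0$, crossing a vertical-ray edge of the trapezoidal decomposition does not cross any $\partial U_c$ and must preserve depth; crossing an arc $e \subset \partial D$ of a disk $D$ of color $c$ crosses exactly $\partial U_c$ (and no other $\partial U_{c'}$, because the trapezoidal decomposition inserts a vertex wherever two distinct boundaries meet). Since $e \subset \partial U_c$, its two sides locally lie on opposite sides of $U_c$, so depth changes by $\pm 1$ according to whether the traversal enters or leaves $D$, which is exactly the rule used by the algorithm. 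Combining the three phase bounds yields the claimed $O(n \log n + k)$ expected running time.
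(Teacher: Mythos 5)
Your proposal is correct and follows essentially the same three-phase argument as the paper: union construction via power diagrams in $O(n\log n)$, the trapezoidal map via Lemma~\ref{lem:arcs-arrangement} in $O(n\log n + k)$ expected time, and an $O(n+k)$ BFS traversal with constant-time depth updates. The extra detail you supply (that same-color boundary arcs never cross transversally, and the correctness of the $\pm 1$ propagation rule) is a welcome elaboration of steps the paper leaves implicit, but it does not change the route.
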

\begin{proof}
For any color $c\in[m]$, 
the time taken to construct $U_c$ is $O(|\disks_c|\log |\disks_c|)$
via power diagrams~\cite{aur88}. Hence, the time taken to 
construct $U_1,\ldots, U_m$ will be $O(n\log n)$.
The expected time to construct the trapezoidal map is 
$O(n\log n + k)$.
The traversal of the map
can be performed in $O(n+k)$ time by using the standard 
doubly-connected edge list (DCEL) data structure~\cite{dutchBook,cgal2DArrangement}.
The check of $C\in D$ or not
can be performed in constant time by testing if an arbitrary 
point in $C$ lies inside $D$. Overall, the expected time is 
$O(n\log n +k)$.
\end{proof}

\subsection{The second algorithm}
In the second algorithm, we will break the original problem 
into several sub-problems with depth $O(\OPT)$ 
and apply Lemma~\ref{lem:first-algo}
on each sub-problem. This will lead 
to a running time of $O(n\log n + n\cdot\OPT)$.

\paragraph{Algorithm.} We first construct a collection of 
grids $\mathcal{G}$ by applying Lemma~\ref{lem:grid-shift}
with $s=1$ and $\Delta=0.25$.
Fix any one shift of the grid. 
For each non-empty cell $C$ in the grid, 
compute the disks in $\disks$ 
intersecting $C$, say $\disks \cap C$.
Throw away the disks in $\disks \cap C$
which do not contain any corner of cell $C$. 
Based on the remaining disks in $\disks \cap C$,
run the first algorithm (Lemma~\ref{lem:first-algo}) to
identify a point in this grid with 
the maximum depth.  Repeat this 
for all the grid shifts and report the optimal point $p^*$.

\paragraph{Analysis.} We first justify throwing away of 
disks, which allows us to bound the number of colors whose disks
intersect any cell $C$.

\begin{restatable}{lemma}{boundeddepth}\label{lem:bound-depth}
\begin{enumerate}
\item Consider a shift of the grid in which the optimal 
point $p^*$ is $0.25$-near, and lies in some cell $C^*$. A 
unit disk which does not contain any corner of $C^*$ cannot contain $p^*$.
\item For any cell $C$, after throwing away disks, the number of 
uniquely colored disks in $\cD$ intersecting $C$ is at most $4\cdot \OPT$.
\end{enumerate}
\end{restatable}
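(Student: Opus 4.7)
The two parts are largely independent: part~1 is a purely geometric containment describing where $p^*$ can sit relative to the corners of $C^*$, and part~2 is a short averaging argument over the $4$ corners of $C$ that uses the filtering rule of part~1.

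For part~1, the plan is to reduce the statement to the stronger containment
\[
B(c, 5/4) \;\subseteq\; \bigcup_v B(v, 1),
\]
where $c$ is the center of $C^*$ and the union is over the four corners $v$ of $C^*$. The reduction is immediate: if a unit disk $D$ with center $q$ contains $p^*$, then $\|q-p^*\|\le 1$ together with the $0.25$-near condition $\|p^*-c\|\le 1/4$ forces $q\in B(c, 5/4)$ by the triangle inequality, so the containment yields a corner $v$ with $\|q-v\|\le 1$, i.e., $v\in D$. To verify the containment I would place $c$ at the origin so that the corners sit at $(\pm 1/2, \pm 1/2)$ and study $f(q):=\min_v \|q-v\|$ on $B(0, 5/4)$. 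By the dihedral symmetry of the corner set, the maximum of $f$ on the boundary circle is attained in the axial directions; at the axial point $(5/4, 0)$ the two nearest corners $(1/2, \pm 1/2)$ lie at distance $\sqrt{(3/4)^2+(1/2)^2} = \sqrt{13}/4 < 1$.

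The main obstacle I anticipate is ruling out that $f$ exceeds this boundary value at some interior point of $B(0, 5/4)$. My plan is to exploit that $f$ is a distance function to a finite set, whose Voronoi partition (here just the two coordinate axes) fully describes where critical behaviour can occur: inside each of the four Voronoi cells $f$ equals the distance to a single corner and has no interior critical points; on each axis $f$ is strictly minimized at the foot of the responsible corner; and the only Voronoi vertex is the origin, where $f(0)=\sqrt{1/2}<\sqrt{13}/4$. Hence the boundary maximum dominates, completing the containment and thus part~1.

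For part~2, fix a cell $C$ and let $\cD_C$ denote the set of surviving disks after filtering. By construction each disk in $\cD_C$ contains at least one of the $4$ corners of $C$, so every color appearing in $\cD_C$ has some disk meeting some corner of $C$. For a fixed corner $v$, the number of distinct colors that have a disk of $\cD$ containing $v$ equals $\cd_{\cD}(v)\le \OPT$ by definition of $\OPT$. Summing over the $4$ corners yields that the number of distinct colors among disks in $\cD_C$ is at most $\sum_{v\text{ corner of }C}\cd_{\cD}(v)\le 4\cdot\OPT$, as claimed.
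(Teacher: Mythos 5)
Your proof is correct, and part~2 is essentially the paper's argument (each surviving disk contains a corner of $C$, and each corner has colored depth at most $\OPT$, so summing over the four corners gives $4\cdot\OPT$). Part~1, however, takes a genuinely different route. The paper argues edge by edge: if a unit disk $D$ crosses an edge of the unit cell $C^*$ without containing either endpoint of that edge, the chord it cuts on the edge has half-length at most $1/2$, so the disk's center lies at distance at least $\sqrt{3}/2$ from the edge's supporting line and $D$ protrudes past the edge by at most $1-\sqrt{3}/2<0.15$; since $p^*$ is $0.25$-near the cell center it sits at distance at least $0.25$ from $\partial C^*$, so $D$ cannot reach it. You instead prove the covering statement $B(c,5/4)\subseteq\bigcup_v B(v,1)$ over the four corners $v$ and transfer the problem to the disk's \emph{center} via the triangle inequality ($\|q-p^*\|\le 1$ and $\|p^*-c\|\le 1/4$ give $\|q-c\|\le 5/4$). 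Your verification is sound: on each quadrant (the Voronoi cell of one corner) the distance-to-corner function is convex, so its maximum over the quarter-disk occurs on the arc or the axes, and the extremal value $\sqrt{13}/4<1$ is attained at the axial points. What your route buys is that it sidesteps the case analysis of \emph{how} the disk meets the cell (which the paper delegates to its figure --- e.g., a disk entering near a corner through two edges, or centered inside the cell), replacing it with a single clean optimization; the paper's route is shorter and exposes the explicit $0.15$-versus-$0.25$ margin, which shows there is slack in the choice of $\Delta$. One small polish point: the dihedral symmetry alone does not force the maximum on the boundary circle to the axial directions, so you should keep the explicit computation (minimizing $\cos\theta+\sin\theta$ on $[0,\pi/2]$) rather than rely on symmetry.
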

\begin{proof}
Consider a disk $D$ and a cell $C^*$ as shown in 
Figure~\ref{fig:disk-missing-corners}. Since $|bd| = |cd|$, 
we have $|bd| \leq 1/2$. Now $|ad|=\sqrt{|ab|^2-|bd|^2} \geq 
\sqrt{1-(1/2)^2}\geq \frac{\sqrt{3}}{2}$. Finally, $|ad| + |de| = 1$
implies $|de|\leq 1-\frac{\sqrt{3}}{2}< 0.15$.
\begin{figure}[h]
 \begin{center}
      \includegraphics[scale=0.8]{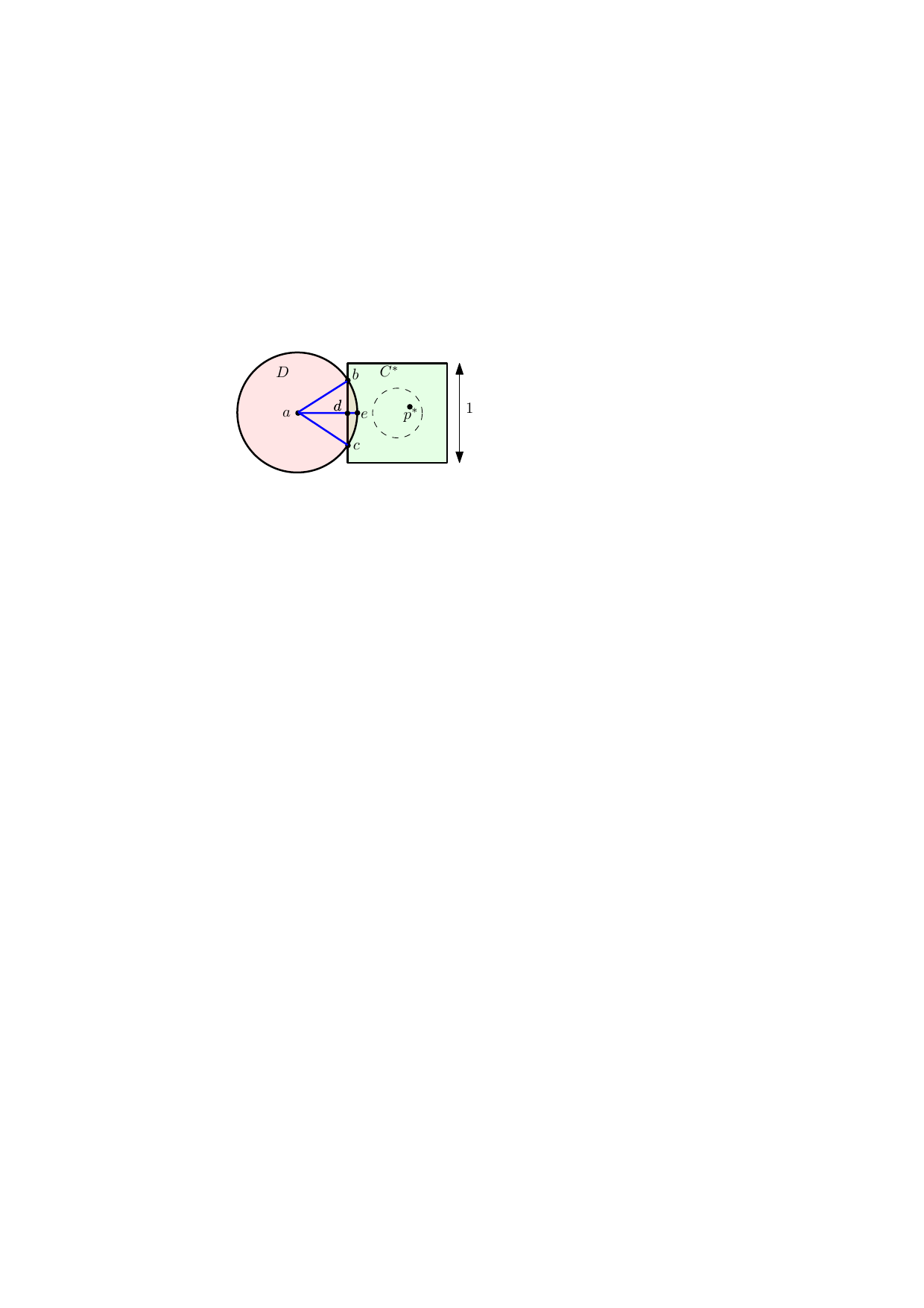}
 \end{center}
 \caption{A disk $D$ not containing any corner of cell $C^*$, cannot contain the optimal point $p^*$.}
 \label{fig:disk-missing-corners}
\end{figure}
This establishes that $D$ cannot contain $p^*$, and completes the proof of (1).

Proof of (2) follows from the fact that the depth of each corner of $C$ is at most $\OPT$. 
\end{proof}

Now we bound the number of intersections among two color classes (Lemma~\ref{lem:two-colors}), and then
use this result to bound the total number of intersections between boundaries of
different color classes (Lemma~\ref{lem:opt-colors}).

\begin{restatable}{lemma}{twocolors}\label{lem:two-colors}
Let $\cD_R$ and $\cD_B$ be a collection of red and blue unit disks,
respectively. 
Let $I(\cD_R,\cD_B)$ be the set of points where there is an 
intersection between an arc in $\partial{U_{R}}$ and an arc in 
$\partial{U_B}$. Then $|I(\cD_R,\cD_B)|=O(|\cD_R| + |\cD_B|)$.
\end{restatable}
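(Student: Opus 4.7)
My plan is to identify $I(\cD_R, \cD_B)$ with a subset of vertices on the outer boundary of the union of the combined collection $\cD := \cD_R \cup \cD_B$, and then invoke the classical Kedem--Livne--Pach--Sharir (KLPS) theorem on the union complexity of pseudo-disks.

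First I would verify the following set-theoretic characterization. Any point $p \in I(\cD_R, \cD_B)$ lies on some arc of $\partial U_R$, which forces $p \in \partial D_R$ for some $D_R \in \cD_R$ and $p \notin \mathrm{int}(D_R')$ for every $D_R' \in \cD_R$ (otherwise $p$ would belong to $\mathrm{int}(U_R)$, not to $\partial U_R$). Symmetrically, $p \in \partial D_B$ for some $D_B \in \cD_B$ and $p \notin \mathrm{int}(D_B')$ for every $D_B' \in \cD_B$. Combining both statements, $p$ is a transverse intersection of the two unit circles $\partial D_R$ and $\partial D_B$ and lies outside the interior of every disk in $\cD$; equivalently, $p$ is a vertex of the arrangement of $\cD$ that sits on $\partial \cU(\cD)$.

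Next, since any two unit circles meet in at most two points, $\cD$ is a family of pseudo-disks. The KLPS theorem then bounds the complexity of $\partial \cU(\cD)$---measured by the number of arcs and vertices on the outer boundary---by $O(|\cD|) = O(|\cD_R| + |\cD_B|)$. Because every point of $I(\cD_R, \cD_B)$ is, by the preceding paragraph, one such boundary vertex, the bound $|I(\cD_R, \cD_B)| = O(|\cD_R| + |\cD_B|)$ follows immediately.

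I do not anticipate a serious obstacle: the proof is a short reduction to a classical union-complexity bound. The only mildly delicate step is the ``only if'' direction of the characterization, where one must simultaneously exploit that $p$ lies on an arc of $\partial U_R$ (placing $p$ outside every red interior) and that $p$ lies on an arc of $\partial U_B$ (placing $p$ outside every blue interior), in order to conclude that $p$ lies on the outer boundary of the combined union, where KLPS can be applied. Under a standard general-position assumption on the input, every intersection in $I$ is transverse, so there is no need to worry about tangencies.
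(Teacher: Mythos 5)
Your proof is correct and follows essentially the same route as the paper: both arguments show that a point of $I(\cD_R,\cD_B)$ lies outside the interior of every disk in $\cD_R\cup\cD_B$, hence is an intersection point of two circles on the boundary of the combined union, and then invoke the linear union complexity of disks (Kedem et al.). Your direct set-theoretic characterization cleanly replaces the paper's small contradiction argument (the $p_a,p_b$ step) but yields the same reduction.
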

\begin{proof}
Consider any point $p \in I(\cD_R, \cD_B)$. 
Let $D_R \in \cD_R$ and $D_B \in \cD_B$ be the 
disks corresponding to the arcs in $\partial{U_R}$ and $\partial{U_B}$, respectively, intersecting at $p$. Since $p \in \partial{U_R} \cap \partial{U_B}$, it implies that $p$ is at a distance
of at least one from the center of any disk in 
$\cD_R \cup \cD_B$. Therefore, by definition of a boundary point, 
we claim that $p \in \partial{(\cD_R \cup \cD_B)}$.

For the sake of contradiction assume that $p$ is not a vertex 
on $\partial{(\cD_R \cup \cD_B)}$. By general position assumption, 
only $D_R$ and $D_B$ contain $p$ and hence, 
either the arc of disk $D_R$ or $D_B$ on $\partial{(\cD_R \cup \cD_B)}$ 
contains $p$. W.l.o.g. assume it is disk $D_R$.
Then $\partial{(\cD_R \cup \cD_B)}$ will have a point 
infinitesimally close to $p$ on either side
 which lie on $D_R$ (as shown in Figure~\ref{fig:boundary-vertex}). 
 Label them $p_a$ and $p_b$.

\begin{figure}[h]
 \begin{center}
      \includegraphics[scale=0.8]{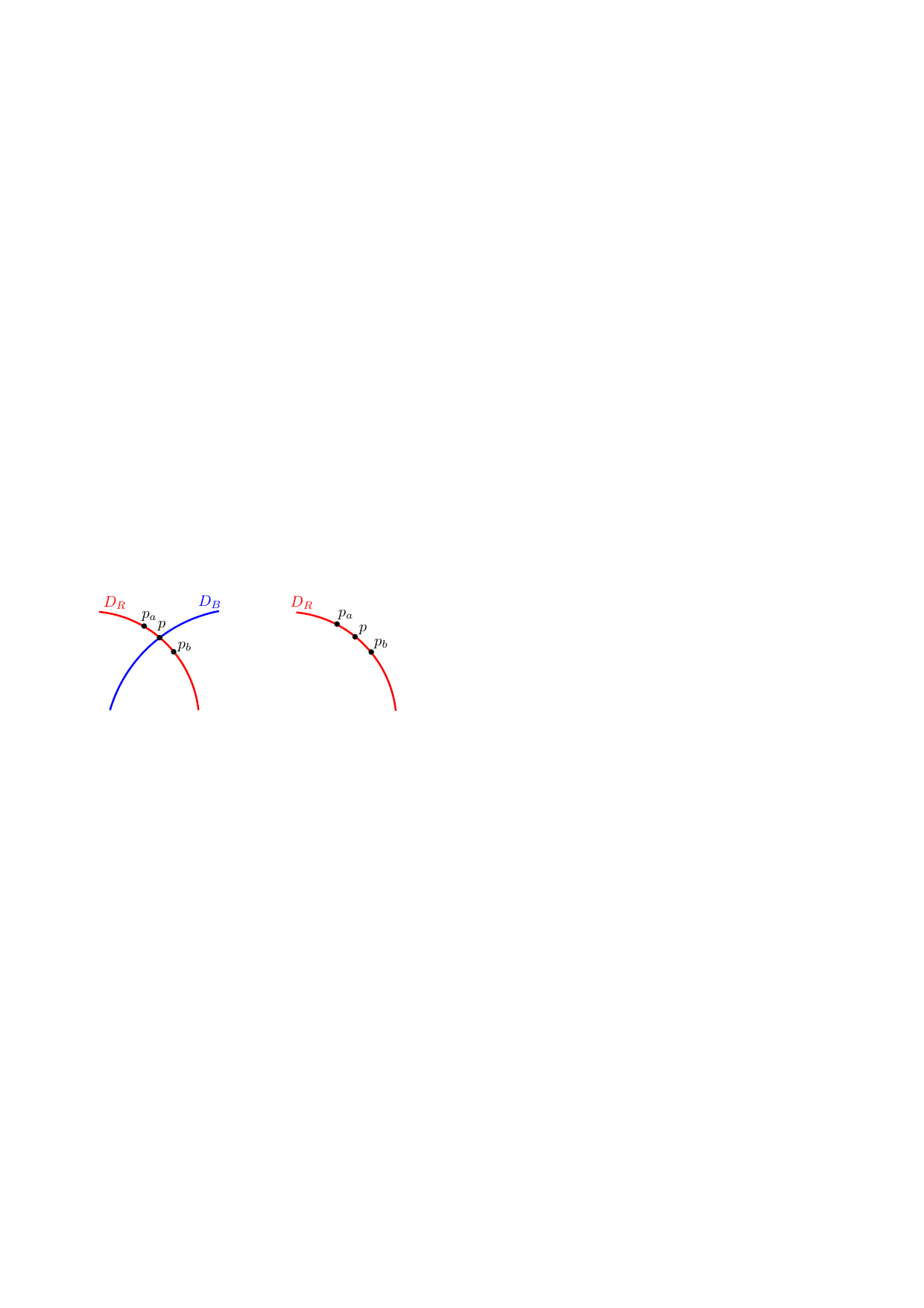}
 \end{center}
 \caption{Proof of Lemma~\ref{lem:two-colors}}
 \label{fig:boundary-vertex}
\end{figure}

Now either $p_a$ or $p_b$ will 
lie at a distance of less than one from the center of $D_B$ 
(in Figure~\ref{fig:boundary-vertex} it is $p_b$). Therefore, 
one among $p_a$ and $p_b$ cannot be on
$\partial{(\cD_R \cup \cD_B)}$. This is a contradiction.
Therefore, $p$ is a vertex on $\partial{(\cD_R \cup \cD_B)}$.
Finally, notice that since $\cD_R \cup \cD_B$ is 
also a collection of disks, the number of vertices 
on $\partial{(\cD_R \cup \cD_B)}$ will be bounded by 
$O(|\cD_R| + |\cD_B|)$~\cite{kedem1986union}.
\end{proof}

\begin{restatable}{lemma}{optcolors}\label{lem:opt-colors}
    Let $\disks$ be a collection of $n$ colored disks in the plane
such that each disk has a color from $t=[4\cdot \OPT]$. 
Given regions $U_1,\ldots, U_t$ corresponding to each color, 
the number of intersection points among the boundaries 
of any two regions is bounded by $O(n\cdot\OPT)$.
\end{restatable}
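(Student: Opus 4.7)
The plan is to deduce this bound by summing the pairwise intersection bound from Lemma~\ref{lem:two-colors} over all pairs of color classes. For each pair of colors $i,j\in[t]$, Lemma~\ref{lem:two-colors} tells us that the number of intersections between $\partial U_i$ and $\partial U_j$ is $O(|\cD_i|+|\cD_j|)$, where $\cD_i$ denotes the subcollection of disks of color $i$.

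Summing this over all unordered pairs $i<j$, the total number of intersection points is
\[
  \sum_{1\le i<j\le t} O(|\cD_i|+|\cD_j|)
  = O\!\left((t-1)\sum_{i=1}^t |\cD_i|\right)
  = O(t\cdot n),
\]
since each $|\cD_i|$ appears in exactly $t-1$ pairs and $\sum_i |\cD_i| = n$. Plugging in $t=4\cdot \OPT$ yields the claimed bound $O(n\cdot \OPT)$.

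I do not anticipate a serious obstacle here: Lemma~\ref{lem:two-colors} already does the topological/geometric heavy lifting (arguing that any red--blue boundary intersection is a vertex of the combined union $\partial(\cD_R\cup\cD_B)$, whose total complexity is linear by the Kedem--Livne--Pach--Sharir bound). The only mild care needed is to check that the double-counting is benign: each intersection point between $\partial U_i$ and $\partial U_j$ is counted exactly once when we sum over unordered pairs, so no additional factor is introduced. Once this is confirmed, the arithmetic above is essentially the entire proof.
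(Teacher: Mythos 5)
Your proposal is correct and matches the paper's proof: both invoke Lemma~\ref{lem:two-colors} for each pair of color classes and sum the resulting $O(|\cD_i|+|\cD_j|)$ bounds over all pairs, using $t=O(\OPT)$ and $\sum_i|\cD_i|=n$ to conclude $O(n\cdot\OPT)$. The only cosmetic difference is that the paper sums over ordered pairs while you sum over unordered pairs, which changes nothing.
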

\begin{proof}
By Lemma~\ref{lem:two-colors} the number of intersection 
points, $|I(\disks_i,\disks_j)|$, among the boundaries
of two regions $\partial{U_i}$ and $\partial{U_j}$
is $O(|\disks_i|+|\disks_j|)$. Summing this quantity over
all the pairs, we get
\[\sum_{i=1}^{t} \sum_{j=1}^{t} O(|\cD_i| + |\cD_j|) = \sum_{i=1}^t O(|\cD_i|\cdot\OPT + n) = O(n\cdot \OPT). \qedhere\]
\end{proof}

\begin{theorem}\label{thm:second-algo}
There is a randomized algorithm for exact colored disk 
MaxRS with $O(n\log{n} + n.\OPT)$ expected
running time. 
\end{theorem}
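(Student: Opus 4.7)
The plan is to verify correctness and bound the expected running time of the algorithm already described, leaning on Lemmas~\ref{lem:grid-shift}, \ref{lem:first-algo}, \ref{lem:bound-depth}, and \ref{lem:opt-colors}. For correctness, Lemma~\ref{lem:grid-shift} applied with $s=1$ and $\Delta=0.25$ guarantees that at least one of the $O(1)$ shifts in $\mathcal{G}$ places the optimal point $p^*$ within distance $0.25$ of the center of its enclosing cell $C^*$. In that shift, Lemma~\ref{lem:bound-depth}(1) ensures that every disk containing $p^*$ also contains some corner of $C^*$, so the filtering step discards no disk relevant to $p^*$. Invoking the exact subroutine of Lemma~\ref{lem:first-algo} on the surviving disks for $C^*$ therefore reports a point of colored depth at least $\OPT$, and since no point can have colored depth exceeding $\OPT$, returning the best candidate over all cells of all shifts gives an exact answer.

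For the running time, let $n_C$ denote the number of disks of $\disks$ that intersect a particular cell $C$. Because each cell is a unit square and each disk has radius $1$, every disk meets only $O(1)$ cells per shift, and $|\mathcal{G}|=O(1)$, so $\sum_C n_C = O(n)$ when the sum ranges over all cells of all shifts. After filtering, Lemma~\ref{lem:bound-depth}(2) bounds by $4\cdot\OPT$ the number of distinct colors among the disks assigned to $C$; applying Lemma~\ref{lem:opt-colors} with $n_C$ disks and $t=4\cdot\OPT$ colors then bounds the intersections among the colored region boundaries by $k_C=O(n_C\cdot\OPT)$. Hence Lemma~\ref{lem:first-algo} runs at cell $C$ in expected time $O(n_C\log n_C + n_C\cdot\OPT)$, and telescoping yields
\[
    \sum_C \bigl(n_C\log n_C + n_C\cdot\OPT\bigr) \;\le\; (\log n + \OPT)\sum_C n_C \;=\; O(n\log n + n\cdot\OPT),
\]
which matches the claimed bound.

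The main obstacle is more bookkeeping than mathematics: we need, in total time $O(n)$ per shift, both the list of disks meeting each cell and the pruned sub-list of disks covering at least one corner of the cell. I would handle this with a single bucketing sweep in which each disk is explicitly assigned to the $O(1)$ cells it overlaps and a constant-time containment test per (disk, corner) pair performs the pruning, so the preprocessing is absorbed into the per-cell cost already accounted for by Lemma~\ref{lem:first-algo}. Everything else reduces to quoted lemmas.
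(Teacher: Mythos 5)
Your proposal is correct and follows essentially the same route as the paper's proof: the same grid-shift decomposition, the same use of Lemma~\ref{lem:bound-depth} to justify pruning and bound the number of colors per cell, Lemma~\ref{lem:opt-colors} to bound $k_C = O(n_C\cdot\OPT)$, and the same summation over cells; your observation that reported depths never exceed $\OPT$ also correctly handles the paper's ``subtle point'' about the per-cell subroutine possibly reporting a point outside the cell.
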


\begin{proof}
Consider a cell $C$ and let $n_C$ be the number of disks
intersecting $C$. The expected time taken to run the first algorithm
(Lemma~\ref{lem:first-algo}) on these disks
is $O(n_C\log n_C + k_C)$,
where $k_C=O(n_C\cdot \OPT)$ (by Lemma~\ref{lem:opt-colors}).
Note that we can apply Lemma~\ref{lem:opt-colors} since 
by Lemma~\ref{lem:bound-depth}, there are at most $4\cdot \OPT$
colored disks remaining in each cell. Since each disk intersects
at most constant number of cells in the grid, the running
time over all non-empty cells is $O(n\log n + n\cdot \OPT)$.

Now we argue about the correctness. Consider the grid 
in $\mathcal{G}$ where $p^*$ is $0.25$-near.
Let $C$ be the cell containing $p^*$.
By Lemma~\ref{lem:bound-depth}, throwing 
away the disks will not change the colored depth of $p^*$.
One subtle point is that when the first algorithm (Lemma~\ref{lem:first-algo}) is run on the remaining disks 
in $C$, then a point not lying inside $C$ might
get reported. However, by definition of $p^*$ being the optimal
point, the reported point will also have the same colored depth 
as $p^*$.
\end{proof}

\subsection{The final algorithm}
In this section we will design a $(1-\eps)$-approximation
algorithm for colored disk MaxRS via random sampling 
on colors.
Using Theorem~\ref{thm:col-balls-half-approx}
(with $\eps=1/4$),
we first estimate $\OPT$ to get a value 
$\OPT'$ satisfying $\OPT/4 \le \OPT' \le \OPT$. 
If $\OPT' \leq c_1\eps^{-2} \log n$, then run the exact algorithm 
of Theorem~\ref{thm:second-algo} on $\disks$.

Now we consider the interesting case of $\OPT' > c_1\eps^{-2}\log n$.
Sample each color in $[m]$ independently 
with probability $\lambda= \frac{c_1\log n}{\eps^{2}\OPT'}$. 
If a color is sampled, then add all
disks in $\disks$ of that color to the set $R$.
Run the exact algorithm  of Theorem~\ref{thm:second-algo}
on the sampled set $R$ to obtain a point $\hat{x}$ 
which is returned as the solution.

\begin{restatable}{lemma}{samplingruntime} \label{lem:sampling_runtime}
    The algorithm has an expected running time of $O(\eps^{-2}n\log n)$.
\end{restatable}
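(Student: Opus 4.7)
The running time has three contributions that I would bound separately: (i) the $O(n\log n)$ cost of invoking Theorem~\ref{thm:col-balls-half-approx} with $\eps=1/4$ to obtain $\OPT'$, (ii) the cost of the exact algorithm in the small-$\OPT'$ case, and (iii) the expected cost of the exact algorithm on the random subset $R$ in the large-$\OPT'$ case. Part (i) is immediate. For part (ii), if $\OPT'\le c_1\eps^{-2}\log n$, then $\OPT\le 4\OPT'=O(\eps^{-2}\log n)$, so by Theorem~\ref{thm:second-algo} the running time on $\disks$ is $O(n\log n+n\cdot\OPT)=O(\eps^{-2}n\log n)$. So the whole proof reduces to analyzing case (iii).

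\textbf{Analyzing the sampled exact call.} In case (iii), the exact algorithm is run on $R$ and by Theorem~\ref{thm:second-algo} takes expected time $O(|R|\log|R|+|R|\cdot\OPT_R)$, where $\OPT_R$ is the maximum colored depth of any point with respect to $R$. The first term is easy: $\Exp[|R|]=n\lambda=\frac{c_1 n\log n}{\eps^2\OPT'}\le n$, so $\Exp[|R|\log|R|]=O(n\log n)$. The heart of the argument is bounding $\Exp[|R|\cdot\OPT_R]$. My plan is:

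\begin{enumerate}
\item \emph{Per-point Chernoff.} Fix a point $p\in\IR^2$ with $\cd_\disks(p)=k\le\OPT$. Since colors are sampled independently, $\cd_R(p)$ is stochastically dominated by a $\mathrm{Binomial}(k,\lambda)$ variable with mean at most $\OPT\cdot\lambda\le 4\OPT'\cdot\lambda=4c_1\eps^{-2}\log n$. A standard Chernoff upper tail gives $\Pr[\cd_R(p)\ge 8c_1\eps^{-2}\log n]\le n^{-\Omega(c_1)}$.

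\item \emph{Union bound over combinatorial placements.} The colored depth is a piecewise-constant function whose regions of constancy are the faces of the arrangement of the $n$ input disks. There are $O(n^2)$ such faces, so it suffices to pick one representative point per face and apply a union bound. By choosing $c_1$ sufficiently large, $\OPT_R=O(\eps^{-2}\log n)$ with probability at least $1-n^{-c}$ for any desired constant $c$.

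\item \emph{Combining with $|R|$.} Split into the ``good'' event $\mathcal{E}$ where $\OPT_R\le C\eps^{-2}\log n$ and its complement. On $\mathcal{E}$, $\Exp[|R|\cdot\OPT_R\mid\mathcal{E}]\le C\eps^{-2}\log n\cdot\Exp[|R|]=O\!\left(\frac{n\log^2 n}{\eps^4\OPT'}\right)$. Using the case hypothesis $\OPT'>c_1\eps^{-2}\log n$, this simplifies to $O(\eps^{-2}n\log n)$. The complementary event contributes at most $n^2\cdot n^{-c}=o(1)$ by picking $c\ge 3$, which is negligible.
\end{enumerate}

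Adding up the three contributions yields the claimed $O(\eps^{-2}n\log n)$ expected running time.

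\textbf{Main obstacle.} The only subtlety is step 1 above: at a candidate point $p$ the relevant random variable is the number of \emph{distinct colors} among the sampled disks containing $p$, not the number of sampled disks containing $p$. Since each color is sampled independently and a color ``hits'' $p$ iff at least one of its disks covers $p$, this random variable is exactly a sum of independent Bernoullis indexed by the (at most $\OPT$) colors present at $p$, so Chernoff applies cleanly. This stochastic structure is what makes the color-level (rather than disk-level) sampling essential; without it one would lose the key independence needed for the tail bound.
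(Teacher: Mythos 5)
Your proof is correct in substance, but it takes a noticeably heavier route than the paper's. The paper's argument for the large-$\OPT'$ case is a one-liner: since $R\subseteq\disks$, the colored depth of any point with respect to $R$ is at most its depth with respect to $\disks$, so $\OPT_R\le\OPT$ \emph{deterministically}; hence $\Exp[|R|\cdot\OPT_R]\le\OPT\cdot\Exp[|R|]=\OPT\cdot\frac{c_1 n\log n}{\eps^2\OPT'}\le 4c_1\eps^{-2}n\log n$, using $\OPT\le 4\OPT'$. No concentration inequality is needed for the runtime bound. You instead prove the high-probability statement $\OPT_R=O(\eps^{-2}\log n)$ via a Chernoff bound and a union bound over the $O(n^2)$ arrangement faces, and then condition on that event. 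This works and lands on the same bound, but it essentially re-derives the machinery the paper reserves for the \emph{correctness} argument (Lemma~\ref{lem:whp_correct}); for the runtime alone it is overkill. Two small technical points in your version: in step 3 you should write $\Exp[|R|\cdot\OPT_R\mid\mathcal{E}]\le C\eps^{-2}\log n\cdot\Exp[|R|\mid\mathcal{E}]$ rather than $\Exp[|R|]$ (harmless, since $\Exp[|R|\mid\mathcal{E}]\le\Exp[|R|]/\Pr[\mathcal{E}]=O(\Exp[|R|])$, but as written the inequality is not justified), and on the bad event you implicitly bound $|R|\cdot\OPT_R\le n^2$, which you should state. Neither issue is fatal; the deterministic bound $\OPT_R\le\OPT\le 4\OPT'$ simply makes both of them unnecessary.
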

\begin{proof}
If $\OPT' \leq c_1\eps^{-2}\log n$, then the 
expected running time is $O(n\cdot \OPT)=O(\eps^{-2}n\log n)$.
Now consider the case where $\OPT' > c_1\eps^{-2}\log n$.
For each color $j \in [m]$, let $\disks_j$ be the disks of color $j$. Let $X_j$ be an indicator random
variable such that $X_j=1$ if color $j$ is sampled, and $X_j=0$ otherwise. So we have
$\Exp[X_j] = \lambda$.

The size of $R$ can be written as $|R| = \sum_{j \in [m]} X_j\cdot|\disks_j|$. So, its expected size is
\[\Exp[|R|] = \sum_{j\in[m]} |\disks_j|\cdot \Exp[X_j] = \lambda\cdot\sum_{j\in[m]} |\disks_j| = \frac{c_1n\log n}{\eps^{2}\OPT'}.\]

We run the exact algorithm on the set $R$, giving the desired runtime
$\Exp[|R|\cdot\OPT_R] \le \OPT\cdot\Exp[|R|] = O(\eps^{-2}n\log n)$ in expectation.
\end{proof}

The arrangement of $\disks$ contains 
$O(n^2)$ cells. The colored depth inside each cell
remains the same. A cell is {\em shallow} if
$\cd_{\disks}(\cdot)<(1-\eps)\OPT$, for any 
$0 < \eps < 1$. Pick an arbitrary 
point from each shallow cell and let $P'$ be the 
collection of shallow points. The next lemma 
establishes that with high probability, any shallow 
point in $P'$ will have colored depth $o(\log n)$
w.r.t. $R$. On the other hand, there exists at least one 
point (namely the optimum) that has colored depth $\Omega(\log n)$ w.r.t. $R$. Therefore, running the 
exact algorithm on $R$ will not report a point from $P'$
(with high probability), and hence the output is a $(1-\eps)$-approximation.

\begin{restatable}{lemma}{whpcorrect} \label{lem:whp_correct}
Let $p^*$ be the point with the optimal colored depth 
and let $P'$ be the shallow points w.r.t. $\cD$.  Then, for $c=\frac{(1-\eps/2)\OPT}{\eps^2\OPT'}$,
we have with high probability,
\begin{enumerate}
    \item \emph{$\cd_R(p^*) > c\cdot\log n$}.
    \item for all $p\in P'$, \emph{$\cd_R(p) < c\cdot\log n$}.
\end{enumerate}
\end{restatable}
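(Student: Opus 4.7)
My plan is a textbook Chernoff concentration argument combined with a union bound over the shallow points. The first step is to observe that, for any fixed point $p$, once the random sample of colors has been drawn, $\cd_R(p)$ is precisely the number of colors that cover $p$ in $\disks$ \emph{and} happen to be sampled. Since each color is sampled independently with probability $\lambda = \frac{c_1\log n}{\eps^2 \OPT'}$, this makes $\cd_R(p)$ a sum of $N_p := \cd_\disks(p)$ independent $\mathrm{Bernoulli}(\lambda)$ indicators, with mean $\mu_p = N_p\lambda$. A direct algebraic identity gives $c\log n = \tfrac{1-\eps/2}{c_1}\,\lambda\OPT$, which places the threshold at a predictable multiplicative offset from both $\mu_{p^*} = \lambda\OPT$ and the worst-case $\mu_p < (1-\eps)\lambda\OPT$ for shallow $p$.

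For Part~1, the ratio $c\log n/\mu_{p^*} = (1-\eps/2)/c_1$ is strictly less than $1$ for any valid choice $c_1 > 1-\eps/2$, so this is a Chernoff lower-tail question. I will apply $\Pr[X \le (1-\delta)\mu] \le \exp(-\delta^2\mu/2)$ with $\delta = 1-(1-\eps/2)/c_1$ (a positive constant) and use $\mu_{p^*} \ge c_1\log n/\eps^2$, which follows from the fact that the algorithm only enters the sampling branch when $\OPT \ge \OPT' > c_1\eps^{-2}\log n$. This makes the failure probability $n^{-\Omega(1)}$.

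For Part~2, I will first observe that the arrangement of $n$ unit disks has complexity $O(n^2)$, hence $|P'| = O(n^2)$, so I only need to pay an $O(\log n)$ slack in the exponent for the union bound. For each fixed $p \in P'$, I will apply the additive form of Chernoff's upper tail, $\Pr[X \ge T] \le \exp(-(T-\mu)^2/(2T))$, with $T = c\log n$ and $\mu = \mu_p \le (1-\eps)\lambda\OPT$. Because $T - \mu \ge \Theta(\eps\cdot\lambda\OPT/c_1)$ and $T = \Theta(\lambda\OPT/c_1)$, the exponent simplifies to $\Omega(\eps^2\lambda\OPT) = \Omega(\log n)$, with a hidden constant that can be boosted by tuning $c_1$ so that the $O(n^2)$-factor union bound still leaves overall failure probability $n^{-\Omega(1)}$. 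I prefer this additive form of Chernoff to the usual multiplicative form because it degrades gracefully when $\mu_p$ happens to be tiny (e.g.\ when $N_p$ is close to $0$).

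The main technical subtlety lies in calibrating the constant $c_1$ inside $\lambda$: it must simultaneously be large enough that (i) the lower-tail deviation $\delta$ in Part~1 is a positive constant bounded away from $0$, and (ii) the per-point failure probability in Part~2 is at most $n^{-3}$ or so to absorb the $O(n^2)$-factor of the union bound; yet it must be small enough ($c_1 < (1-\eps/2)/(1-\eps)$) so that $\mu_p$ stays below the threshold $c\log n$ for every shallow $p$, keeping the upper-tail argument valid in the first place. Verifying that a valid window for $c_1$ exists across the paper's parameter regime, and propagating the constants carefully so that the Chernoff exponents are uniformly $\Omega(\log n)$ with a constant strictly greater than $2$, is the step I expect to require the most care.
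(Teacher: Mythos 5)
Your overall strategy---model $\cd_R(p)$ as a sum of $\cd_{\disks}(p)$ independent Bernoulli$(\lambda)$ indicators, apply a lower-tail Chernoff bound at $p^*$ and an upper-tail bound at each of the $O(n^2)$ shallow points, then union bound---is exactly the paper's. The gap is in the claim that the Part~2 exponent ``can be boosted by tuning $c_1$.'' It cannot, under the constraints you yourself impose. You correctly note that the upper-tail argument is only meaningful when $\mu_p < T = c\log n$ for every shallow $p$, which forces $c_1 < \frac{1-\eps/2}{1-\eps}$, while Part~1 forces $c_1 > 1-\eps/2$. Inside this window, take the worst case $\OPT=\OPT'$ (so $\lambda\OPT = c_1\eps^{-2}\log n$) and $\mu_p$ just below $(1-\eps)\lambda\OPT$. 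Then
\[
\frac{(T-\mu_p)^2}{2T} \;\le\; \frac{\bigl((1-\eps/2)-(1-\eps)c_1\bigr)^2}{2(1-\eps/2)\,\eps^2}\,\log n \;\le\; \frac{1-\eps/2}{2}\,\log n \;<\; \tfrac{1}{2}\log n,
\]
where the middle quantity is \emph{decreasing} in $c_1$ over the admissible window (the gap $(1-\eps/2)-(1-\eps)c_1$ shrinks faster than $\lambda$ grows), so increasing $c_1$ only hurts. The per-point failure bound is therefore never better than $n^{-1/2}$, which cannot absorb the $O(n^2)$-factor union bound: the ``valid window for $c_1$'' you defer to the end does not exist for the threshold $c\log n$ taken literally.

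The resolution---and what the paper's proof actually does, identifying $(1-\eps/2)\mu$ with $c\log n$ only loosely at the last step---is to place the separating threshold at $(1-\eps/2)\lambda\OPT = (1-\eps/2)\mu_{p^*}$ rather than at $\frac{1-\eps/2}{c_1}\lambda\OPT$; equivalently, the constant $c$ in the statement should carry a factor of $c_1$. With that threshold the lower tail at $p^*$ has relative deviation $\eps/2$ and exponent at least $\frac{c_1}{8}\log n$, the upper tail at a shallow $p$ has exponent at least $\frac{c_1(1-\eps)}{12}\log n$ (using $(1+\eps/2)(1-\eps) < 1-\eps/2$ to see the threshold still clears $(1+\eps/2)$ times the shallow mean), and both exponents grow linearly in $c_1$, so a single large choice of $c_1$ handles both tails and the $O(n^2)$ union bound simultaneously. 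You correctly sensed the tension around $c_1$, but the proposal as written asserts it away rather than resolving it, and with the literal threshold it cannot be resolved.
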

\begin{proof}
Let $M$ be the set of colors which have a disk covering $p^*$, and for $j\in M$ let  $X_j$ be the random variable indicating whether color $j$ was sampled. So, 
the colored depth of $p^*$ w.r.t. $R$ will be given by $X:=\cd_R(p^*)=\sum_{j\in M}X_j$, and we can compute its expectation as
\[\mu:=\Exp[\cd_R(p^*)]=\Exp\left[\sum_{j\in M}X_j\right]=\sum_{j\in M}\Exp[X_j]=\lambda\cdot\OPT. \]
The concentration of the depth around the expectation can be given by,
\[
 \Pr\left[X\le\left(1-\frac{\eps}{2}\right)\mu\right]
                      \le \exp\left(-\frac{\mu(\eps/2)^2}{2} \right)
                      \le \exp\left( -\frac{c_1\OPT}{8\OPT'}\cdot\log n\right)
                      \le \frac{1}{n^k}
,\]
which holds for $k=\frac{c_1\OPT}{8\OPT'} \geq \frac{c_1}{8} \in \Omega(1)$. So with the high probability of ($1-1/n^k$) we have that $\cd_R(p^*)>(1-(\eps/2))\mu\ge c\cdot\log n$.
This proves (1).
 
Consider some point $p\in P'$. Let $M'$ be the set of colors which have a disk covering $p$,
and for $j\in M'$, let $Y_j$ be
the indicator random variable representing whether color $j$ was sampled. So, 
the colored depth of $p$ w.r.t. $R$ will be given by $Y:=\cd_R(p)=\sum_{j\in M'}Y_j$, and we
can compute its expectation as
\[\mu':=\Exp[\cd_R(p)]=\Exp\left[\sum_{j\in M'}Y_j\right]
 =\sum_{j\in M'}\Exp[Y_j]<\lambda(1-\eps)\OPT. \]

The concentration of the depth around the expectation can be given by,
\[
 \Pr\left[Y\ge\left(1+\frac{\eps}{2}\right)\mu'\right] \le \exp\left(-\frac{\mu'(\eps/2)^2}{3} \right)
        \le \exp\left( -\frac{c_1(1-\eps)\OPT}{12\OPT'}\cdot\log n\right)
        \le \frac{1}{n^{k'}}
,\]
which holds for some $k'=\frac{c_1(1-\eps)\OPT}{12\OPT'} \ge \frac{c_1(1-\eps)}{12} = \Omega(1)$.
So with the high probability of ($1-1/n^{k'}$) we have that
$\cd_R(p)<(1+(\eps/2))\mu' < (1-(\eps/2))\mu \le c\cdot\log n$.

We can now union bound over all $O(n^2)$ points $p\in P'$ to get that
$\Pr[\forall p\in P', \cd_R(p)<c\cdot\log n]\ge 1-\frac{1}{n^{k'-2}}$.
But for a large enough constant $c_1$, we will have that $k'>2$, and hence with high probability,
for all points $p\in P'$ their colored depth w.r.t. $R$ is $\cd_R(p) < c\cdot\log n$. This proves (2). \qedhere
\end{proof}

The above two lemmas establish Theorem~\ref{thm:col-balls-1minus-eps-approx}.

\section{Lower bound for batched MaxRS}
Consider the batched maximum range sum problem (batched MaxRS), where we are given a set of $n$
weighted points $P=\{p_1,p_2,\ldots,p_n\}$,
% with weight function $w:P\rightarrow\R$ on the real line,
and a set of $m$ intervals of length $L_1,L_2,\ldots,L_m$. 
For all $1\leq i \leq m$, the goal is to find the placement
of interval of length $L_i$ 
so that the total weight of points of $P$ 
covered is maximized. In this section, we
show a $\Omega(mn)$ lower bound for this problem, conditioned on the conjectured hardness
of the $(min,+)$-convolution problem.
We achieve this via a series of reductions as illustrated in Figure~\ref{fig:reductions}.

\begin{figure}[ht]
    \centering
    \includegraphics[width=0.9\textwidth, page=4]{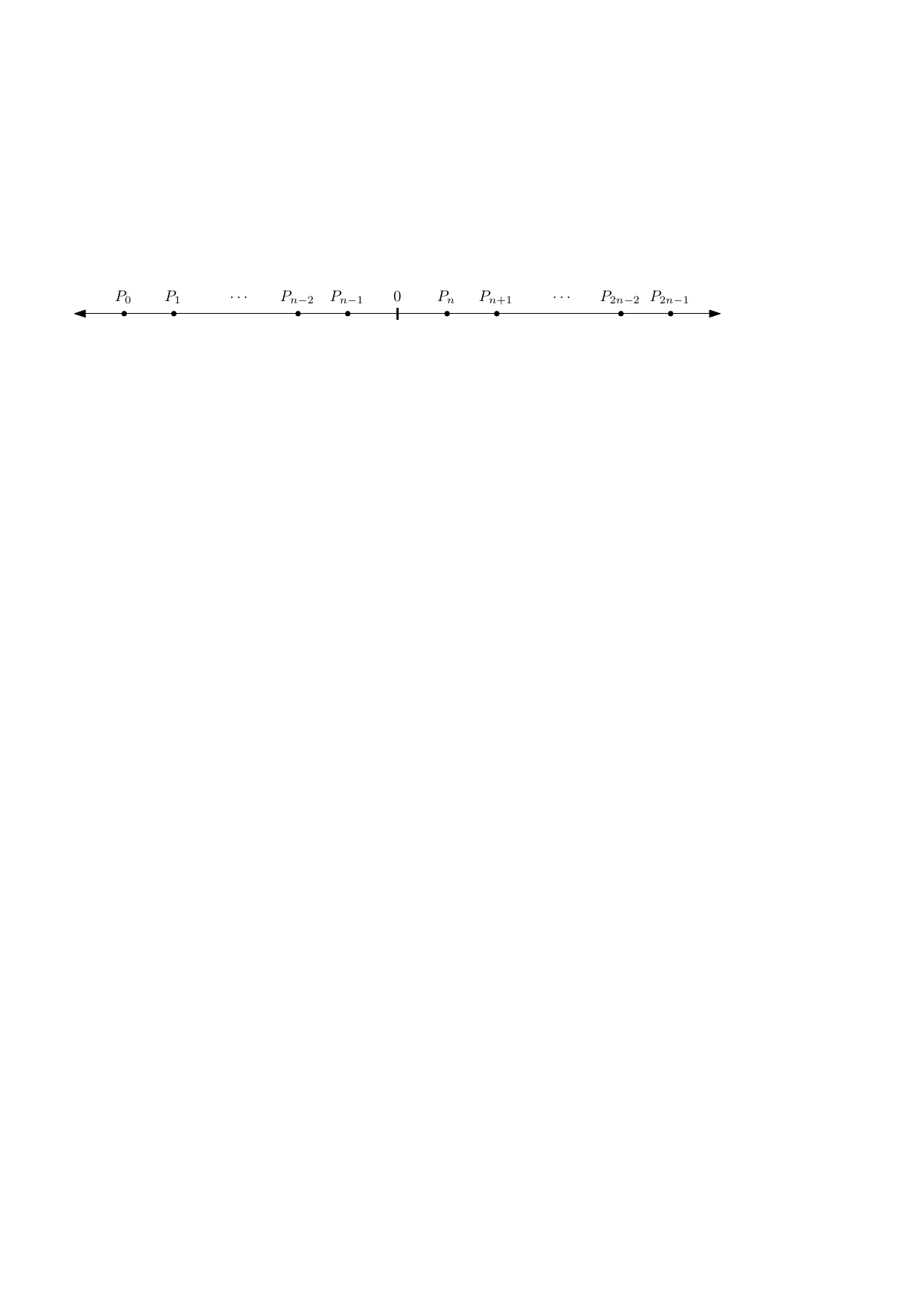}
    \caption{Series of reductions leading to conditional hardness of batched MaxRS.}
    \label{fig:reductions}
\end{figure}

The reduction from $(\min,+)$-convolution to positive $(\max,+,M)$-convolution goes via a series of
straightforward reductions,
the details of which are given in Subsection~\ref{subsec:minpmconv} to \ref{subsec:maxpmconv}.
The technically interesting part of the reduction, which is 
from positive (max,+,M)-convolution to batched MaxRS is given in Subsection~\ref{subsec:batmaxrs}.

% \begin{definition}[$(\min,+)$-convolution]
%     Given two sequences \( A = (A_0, A_1, \dots, A_{n-1}) \) and \( B = (B_0, B_1, \dots, B_{n-1}) \), their
% $(\min,+)$-convolution produces a new sequence \( C \), where each element is defined as:
% \[ C_k = \min_{i + j = k} (A_i + B_j), \]
% for $k\in\{0,\dots,n-1\}$.
% \end{definition}

% This problem admits a trivial quadratic algorithm, but has been conjectured to be quadratic hard,
% i.e., there exists no $O(n^{2-\epsilon})$ algorithm for it, for any constant
% $\epsilon>0$~\cite{CyganMWW19}.

\subsection{Reduction from (min,+)-convolution to (min,+,M)-convolution}\label{subsec:minpmconv}
\paragraph{The $(\min,+,M)$-convolution problem.}
\begin{sloppypar}
Given two sequences $D = (D_0, D_1, \dots, D_{n-1})$, $E = (E_0, E_1, \dots, E_{n-1})$, and a set of $m$
distinct indices $M = \{k_0, k_1, \dots, k_{m-1}\}$, where each $k_s \in \{0, 1, \dots, n-1\}$.
The $(\min,+,M)$-convolution problem is to compute the sequence $F_M = (F_{k_s})_{s=0}^{m-1}$ where for
each $k_s \in M$,
\[ F_{k_s} = \min_{\substack{i+j=k_s}} (A_i + B_j). \]
\end{sloppypar}

\paragraph{Reduction:}
Let input sequences $A, B$ be given for an instance of $(\min,+)$-convolution, and we are required to
compute $C_k = \min_{i+j=k} (A_i + B_j)$ for all $k \in K_\text{all}=\{0, 1, \dots, n-1\}$. We will solve
this using an oracle for the $(\min,+,M)$-convolution problem.

We partition the set $K_\text{all}$ into $p = \lceil n/m \rceil$ disjoint subsets $M_1, M_2, \dots,
M_p$. Each subset $M_s$ contains at most $m$ indices from $K_\text{all}$. For instance, we can define
$M_s = \{ i \in K_\text{all} \mid (s-1)m \le i < sm \text{ and } i < n \}$.  For each $s \in \{1,
\dots, p\}$, we make one call to the $(\min,+,M)$-convolution oracle with inputs $A, B,$ and $M_s$. The
union of the results from these $p$ calls yields all the required $C_k$ values for the original
$(\min,+)$-convolution problem.

If we could solve $(\min,+,M)$-convolution in $o(nm)$ time, then we would have solved the $(\min,+)$-convolution
problem in $o(nm)\times n/m = o(n^2)$ time. But the latter problem is conjectured to be
$\Omega(n^2)$ hard, and hence the  $(\min,+,M)$-convolution problem has a conditional
lower bound of $\Omega(nm)$.

\subsection{Reduction from (min,+,M)-convolution to (max,+,M)-convolution}
\paragraph{The $(\max,+,M)$-convolution problem.}
\begin{sloppypar}
Given two sequences $A = (A_0, A_1, \dots, A_{n-1})$, $B = (B_0, B_1, \dots, B_{n-1})$, and a set of $m$
distinct indices $M = \{k_0, k_1, \dots, k_{m-1}\}$, where each $k_s \in \{0, 1, \dots, n-1\}$.
The $(\max,+,M)$-convolution problem is to compute the sequence $C_M = (C_{k_s})_{s=0}^{m-1}$ where for
each $k_s \in M$, \[ C_{k_s} = \max_{\substack{i+j=k_s}} (A_i + B_j). \]
\end{sloppypar}

\paragraph{Reduction:}
Suppose we are given an instance of the $(\min,+,M)$-convolution problem, consisting of sequences $D, E$
(each of length $n$) and a set of $m$ target indices $M$. We aim to compute $F_k = \min_{i+j=k}
(D_i + E_j)$ for each $k \in M$.
We construct two new sequences, $A'$ and $B'$, as follows:
\[ A'_i = -D_i, \quad \text{and} \quad B'_j = -E_j, \quad \text{for } j \in \{0, \dots, n-1\}. \]
Next, we invoke an oracle for the $(\max,+,M)$-convolution problem with the transformed sequences $A'$,
$B'$, and the original set of indices $M$. Let the output from this oracle be denoted by $C'_k$ for
each $k \in M$.
According to the definition of $(\max,+,M)$-convolution, for each $k \in M$:
\[ C'_k = \max_{i+j=k} (A'_i + B'_j) = \max_{i+j=k} (-D_i - E_j) = - \left( \min_{i+j=k} (D_i + E_j) \right) = -C_k. \]
Therefore, the desired results $C_k$ for the original $(\min,+,M)$-convolution instance are obtained by
computing $C_k = -C'_k$ for every $k \in M$.
Note that the construction of sequences $A$ and $B$ requires $O(n)$ time, and the recovery of $C_k$
from $C'_k$ takes $O(m)$ time. This is hence an efficient, linear-time reduction.

\subsection{Reduction from (max,+,M)-convolution to Positive (max,+,M)-convolution} \label{subsec:maxpmconv}
\paragraph{The positive $(\max,+,M)$-convolution problem.}
Given two sequences $A = (A_0, A_1, \dots, A_{n-1})$ and $B = (B_0, B_1, \dots, B_{n-1})$, such that $A_i
\ge 0$ for all $i$ and $B_j \ge 0$ for all $j$. Additionally, a set of $m$ distinct indices
$M = \{k_0, k_1, \dots, k_{m-1}\}$ is provided, where each $k_s \in \{0, 1, \dots, n-1\}$. The Positive
$(\max,+,M)$-convolution problem is to compute the sequence $C_M =
(C_{k_s})_{s=0}^{m-1}$ where for each $k_s \in M$,
\[ C_{k_s} = \max_{\substack{i+j=k_s}} (A_i + B_j). \]

\paragraph{Reduction:}
Suppose we are given an instance of the $(\max,+,M)$-convolution problem, with sequences $A, B$ (which may
contain negative values) and a set of $m$ indices $M$. Our goal is to compute $C_k =
\max_{i+j=k} (A_i + B_j)$ for each $k \in M$, using an oracle for Positive $(\max,+,M)$-convolution which
requires non-negative input sequences.

First, find the minimum element $\Delta$ present in either sequence $A$ or $B$:
\[ \Delta = \min \left( \left(\min_{0 \le i \le n} A_i\right), \left(\min_{0 \le j \le n} B_j\right) \right). \]
If $\Delta \ge 0$, both sequences $A$ and $B$ are already non-negative. In this scenario, $A, B,$
and $M$ can be passed directly to the Positive $(\max,+,M)$-convolution oracle, and its output will be the
desired $C_M$.

If $\Delta < 0$, we construct two new non-negative sequences $A'$ and $B'$ as follows:
\[ A'_i = A_i - \Delta, \quad \text{and} \quad B'_j = B_j - \Delta, \quad
\text{for } j \in \{0, \dots, n-1\} .\]
Since $A_i \ge \Delta$ and
 $B_j \ge \Delta$ for all respective $i,j$, it is guaranteed that $A'_i \ge 0$ and $B'_j \ge 0$. We
 then call the Positive $(\max,+,M)$-convolution oracle with these non-negative sequences $A', B'$, and
 the set of indices $M$. Let the results from the oracle be $C'_k$ for $k \in M$. For each $k \in
 M$, the oracle computes:
\begin{align*}
    C'_k = \max_{\substack{i+j=k}} (A'_i + B'_j)
         = \left( \max_{\substack{i+j=k}} (A_i + B_j) \right) - 2\Delta = C_k - 2\Delta.
\end{align*}
Thus we can obtain the required sequence $C_k$ as $C_k=C'_k+2\Delta$.
The computation of $\Delta$ takes $O(n)$ time. The construction of $A'$ and $B'$ also takes $O(n)$
time, and recovering the final results $C_k$ from $C'_k$ takes $O(m)$ time. This is hence an
efficient, linear-time reduction.

\subsection{From positive (max,+,M)-convolution to batched MaxRS} \label{subsec:batmaxrs}
\paragraph{Reduction:}
Let an instance of Positive $(\max,+,M)$-convolution be given by two sequences $A,B$
where $A_i \ge 0, B_j \ge 0$ for all $i,j$, and a set
of $m$ target indices $M = \{k_0, k_1, \dots, k_{m-1}\}$, where each $k_s \in \{0, 1, \dots, n-1\}$. We
wish to compute $C_{k_s} = \max_{i+j=k_s} (A_i+B_j)$ for each $k_s \in M$.

In this subsection, we represent a weighted point $p$, located at x-coordinate $x_p$,
and with weight $w_p$, as $p=(x_p,w_p)$. We construct an instance of batched MaxRS as follows:

\begin{enumerate}
    \item \textbf{Point set construction:} Create a set $P_S$ of $4n$ weighted points.
        \begin{itemize}
            \item For each $A_i$ where $i \in \{0, \dots, n-1\}$, create a point $p_i^A = (i, A_i)$,
                and its `guard' point $p'^A_i = (i-0.5, -A_i)$.
            \item Let $X_\text{offset} = 2n-1$. For each $B_j$ where $j \in \{0, \dots, n-1\}$,
                create a point $p_j^B = (X_\text{offset}-j, B_j)$, and its guard point
                $p'^B_j=(X_\text{offset}-j+0.5, -B_j)$.
\begin{figure}[ht]
    \centering
    \includegraphics[width=0.9\textwidth, page=3]{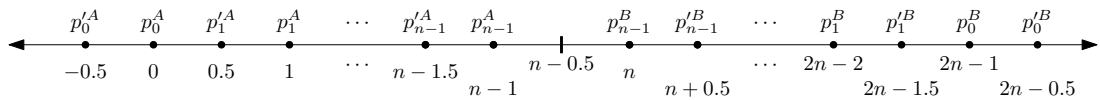}
    \caption{Position of constructed (weighted) points on real line.}
    \label{fig:arrayBRS}
\end{figure}
\end{itemize}

    \item \textbf{Query interval lengths:} For each index $k_s \in M$, we define a query interval
        length as $L_s=2n-1-k_s$ for the batched MaxRS problem.
        Note that since $0\le k_s\le n-1$, the interval length  $L_s\ge n$.
\end{enumerate}
We then query the batched MaxRS oracle with the point set $P$ and the $m$ interval lengths $\{L_s\}_
{k_s \in M}$. Let $W_s^*$ be the maximum sum returned by the oracle for the interval of length
$L_s$. We want to claim that this value $W_s^*$ is in fact the required value $C_{k_s}$.
% Let $w(I)$ be the weight covered by some interval $I$.

\begin{restatable}{lemma}{BRSproof}\label{lem:BRSproof}
For each $k_s \in M$, the maximum weight $W_s^*$ found by the batched MaxRS oracle for an interval
of length $L_s = 2n-1-k_s$, is equal to $C_{k_s}^* = \max_{i+j=k_s} (A_i+B_j)$.
\end{restatable}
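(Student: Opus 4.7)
I will establish $W_s^* = C_{k_s}^*$ by proving both inequalities. For the lower bound $W_s^* \geq C_{k_s}^*$, fix any pair $(i,j)$ with $i+j=k_s$ and place the interval at $[i,\, i+L_s] = [i,\, 2n-1-j]$ (valid since $i,j \in \{0,\dots,n-1\}$, so the length is exactly $2n-1-(i+j) = L_s$). The left endpoint $x=i$ covers the $A$-point $p_i^A$ at position $i$ but not its guard $p'^A_i$ at $i-0.5$, contributing $+A_i$; for every $i' \neq i$, the pair $(p_{i'}^A, p'^A_{i'})$ is either wholly inside or wholly outside the interval and therefore contributes $0$. Symmetrically, the right endpoint $r=2n-1-j$ covers $p_j^B$ but not its guard $p'^B_j$ at $2n-1-j+0.5$, contributing $+B_j$, while all other $B$-pairs contribute $0$. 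Hence the total captured weight is $A_i + B_j$; taking the maximum over admissible pairs gives $W_s^* \geq C_{k_s}^*$.

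For the upper bound $W_s^* \leq C_{k_s}^*$, I would classify each placement by where its endpoints fall relative to the half-open ``slots'' $(i-0.5,\, i]$ (for the left endpoint) and $[2n-1-j,\, 2n-1-j+0.5)$ (for the right endpoint). A pair-by-pair telescoping over $A$-units and $B$-units shows that the $A$-contribution equals $A_i$ precisely when $x$ lies in $(i-0.5,\, i]$ and vanishes when $x$ lies in a gap between slots, and the $B$-contribution behaves analogously in terms of $r$. In the canonical case where both endpoints sit in their respective slots, the interval-length constraint $r-x = L_s = 2n-1-k_s$ combined with $r-x \in [2n-1-i-j,\, 2n-i-j)$ forces the integer equation $i+j=k_s$, so the captured weight belongs to $\{A_i + B_j : i+j = k_s\}$ and is bounded above by $C_{k_s}^*$.

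The main obstacle is handling ``partial'' placements whose endpoints fall outside the canonical slots --- for instance, the right endpoint $r$ past the rightmost $B$-guard at $2n-0.5$, which would leave the $B$-contribution at $0$ while retaining an $A$-contribution of $A_i$; or the mirror image on the left. A careful case analysis is required to show that in every such configuration the remaining contribution either telescopes to zero or is dominated by a canonical placement. The bound $L_s \geq n$ (from $0 \leq k_s \leq n-1$) rules out the interval lying strictly inside a single block, and the non-negativity of $A_i$ and $B_j$ (inherited from the positive-$(\max,+,M)$-convolution instance) is the crucial ingredient for comparing partial configurations against canonical ones. Combining the two inequalities yields $W_s^* = C_{k_s}^*$, as required.
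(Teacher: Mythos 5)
Your lower bound $W_s^* \ge C_{k_s}^*$ is correct and coincides with the paper's argument (the canonical placement $[i,\,2n-1-j]$ plus the pairing of each point with its guard, which is the content of Claim~\ref{clm:interval_wt}). Your upper bound also follows the paper's overall route: classify placements by where the endpoints fall, reduce the captured weight to (``$A_i$ or $0$'') plus (``$B_j$ or $0$''), and use the length constraint to force $i+j=k_s$ in the two-sided case. That canonical case is handled correctly: $r-x\in(2n-1-i-j,\;2n-i-j)$ together with $r-x=2n-1-k_s$ indeed forces $i+j=k_s$.

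The gap is precisely the step you defer to ``a careful case analysis'': the one-sided configurations are \emph{not} dominated by canonical placements, and non-negativity does not rescue them. Concretely, take $n=2$, $A=(0,0)$, $B=(0,100)$, $k_s=0$, so $L_s=3$ and $C_0^*=A_0+B_0=0$. The interval $[-1,2]$ contains both $A$-pairs in their entirety (net contribution $0$) and contains $p_1^B$ at position $2$ but not its guard at $2.5$, so it captures weight $B_1=100>C_0^*$. The structural reason your domination argument fails: if only the $B$-side is ``on'' with surviving index $j$ while every $A$-pair is wholly covered, the length constraint places the left endpoint in $[k_s-j,\,k_s-j+0.5)$, which lies below $-0.5$ only when $j\ge k_s+1$; but then $k_s-j$ is not a valid index, so there is no admissible pair $(k_s-j,\,j)$ whose sum dominates $B_j$. (You are in good company: the paper's own proof has the same unpatched hole --- its third case concedes the weight may be ``$B_b$ for some $b$'' and proceeds without bounding $B_b$ by $C_{k_s}^*$.) The statement should be repaired by strengthening the reduction rather than the case analysis: for instance, add a constant $W>\max_i A_i+\max_j B_j$ to every $A_i$ and $B_j$ before constructing the point set, so that every two-element sum is at least $2W$ while any surviving single-element weight is strictly below $2W$; then $W_s^*=C_{k_s}^*+2W$ and the argument you sketch goes through.
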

\begin{proof}[Proof Sketch.]
    Since every point has an
    associated guard point which negates its weight, any placement of an interval can cover the
    weight of at most one $A$-point, and at most one $B$-point. Since placing an interval of length
    $L_s$, starting at the coordinate of some $A$-point $p^A_i$, covers weight $A_i+B_j$ for
    $j=k_s-i$ (proved in Lemma~\ref{clm:interval_wt}), we can discard all other placements of such a segment (since they will either cover
    negative weight, zero weight, or weight of a single element $A_i$ or $B_j$). Using the batched
    MaxRS oracle gives us the maximum possible such sum, which is hence the maximum possible sum of
    $A_i+B_j$ where the indices satisfy $i+j=k_s$.
\end{proof}

We now formalize the above intuition and give a rigorous proof. Let $w(I)$ be the weight covered by some interval $I$.
\begin{claim}\label{clm:interval_wt}
    The weight covered by an interval $I_{i,j}=[i,(2n-1)-j]$ for some $i,j\in\{0,\ldots,n-1\}$, is given by $w(I_{i,j})=A_i+B_j$.
\end{claim}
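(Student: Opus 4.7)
The plan is to verify the claim by partitioning the points that fall inside $I_{i,j}=[i,\,2n-1-j]$ into those coming from the $A$-block (coordinates in $[-0.5,\,n-1]$) and those from the $B$-block (coordinates in $[n,\,2n-0.5]$), and to exploit the fact that each primary point is paired with a guard of opposite weight placed half a unit away on the appropriate side. Once both a primary and its guard sit inside an interval, their contributions cancel; so only the weights of the boundary-most included primaries will survive, and these should be exactly $A_i$ and $B_j$.

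First I would handle the $A$-side. Since $i\in\{0,\dots,n-1\}$ and $2n-1-j\ge n$, the right endpoint of $I_{i,j}$ lies strictly past the entire $A$-block, so membership in $I_{i,j}$ reduces to the left-endpoint condition alone. Thus $p^A_{i'}=(i',A_{i'})$ is included iff $i'\ge i$, while $p'^A_{i'}=(i'-0.5,-A_{i'})$ is included iff $i'\ge i+1$. The signed sum telescopes:
\[\sum_{i'=i}^{n-1} A_{i'} \;+\; \sum_{i'=i+1}^{n-1}(-A_{i'}) \;=\; A_i.\]
Symmetrically, since $i\le n-1 < n$, the left endpoint of $I_{i,j}$ lies before the entire $B$-block, so $p^B_{j'}=(2n-1-j',B_{j'})$ is included iff $j'\ge j$ and $p'^B_{j'}=(2n-1-j'+0.5,-B_{j'})$ iff $j'\ge j+1$, giving
\[\sum_{j'=j}^{n-1} B_{j'} \;+\; \sum_{j'=j+1}^{n-1}(-B_{j'}) \;=\; B_j.\]
Adding the two contributions yields $w(I_{i,j})=A_i+B_j$.

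There is no real obstacle here; the argument is a direct verification. The only point requiring a small amount of care is confirming that the offset $X_{\text{offset}}=2n-1$ is chosen large enough to separate the $A$- and $B$-blocks, so that the left endpoint $i$ of $I_{i,j}$ only ``cuts'' the $A$-block while the right endpoint $2n-1-j$ only ``cuts'' the $B$-block. This separation is what guarantees that on each side exactly one interval endpoint is active, which is precisely what makes the telescoping collapse to a single surviving term. Once this geometric picture is in place, the sums are routine.
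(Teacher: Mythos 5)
Your proof is correct and follows essentially the same approach as the paper: the interval's endpoints pick up exactly the primaries $p^A_i$ and $p^B_j$ while excluding their guards, and every other covered primary is cancelled by its covered guard (your telescoping sums make this cancellation explicit). No issues.
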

\begin{proof}
    Since $I_{i,j}$ starts at x-coordinate $i$, it includes point $p_{i}^A=(i, A_{i})$, but not the
    corresponding guard point $p'^{A}_{i}=(i-0.5, -A_{i})$. Similarly, since $I_{i,j}$ ends at
    $(2n-1)-j$, it includes the point $p_{j}^B = ((2n-1)-j, B_{j})$, but not its corresponding guard
    point $p'^{B}_{j} = ((2n-1)-j+0.5, -B_{j})$.

    Further for all $i<i'\le n-1$ and $0\le j'<j$, the interval $I_{i,j}$ covers the points
    $p_{i'}^A$, $p'^A_{i'}$, $p_{j'}^B$, $p'^B_{j'}$. By construction, if a point and its
    corresponding guard point are picked, then their weights sum to zero.
    Hence we get that $w(I_{i,j})=A_{i}+B_{j}$.
\end{proof}

\begin{proof}[Proof of Lemma~\ref{lem:BRSproof}]
%Let $C_{k_s}^* = \max_{i+j=k_s} (A_i+B_j)$.
Since $A_i, B_j \ge 0$ for all $i,j \in \{0, \dots, n-1\}$,
it follows that $C_{k_s}^* \ge 0$. Let $(i_0, j_0)$ be a pair of indices such
that $i_0, j_0 \in \{0, \dots, n-1\}$, $i_0+j_0=k_s$, and $A_{i_0}+B_{j_0}=C_{k_s}^*$.
Consider an interval $I_{i_0,j_0}=[i_0, (2n-1)-j_0]$. The length of this interval is $x_{end}-x_{start}
=(2n-1)-(i_0+j_0)=(2n-1)-k_s$, which is precisely $L_s$, the query interval
length for the target index $k_s$. Further from Claim~\ref{clm:interval_wt} we know that
$w(I_{i_0,j_0})=A_{i_0}+B_{j_0}=C_{k_s}^*$

Since the batched MaxRS oracle returns $W_s^*$, the maximum possible sum for an interval of length
$L_s$, we have $W_s^* \ge C_{k_s}^*$.
To show that $W_s^* = C_{k_s}^*$, we argue that there exists an optimal interval $I^* = [x^*, x^*+L_s]$
which is of the form $[i, (2n-1)-j]$ for some $i,j\in\{0,\ldots,n-1\}$ such that $i+j=k_s$.

\begin{enumerate}
  \item If $x^*+L_s<n-0.5$ (i.e., \textbf{$I^*$ covers no $B$-points}): then the interval covers
      all points starting from $p'^A_0$ up to some point $p^A_a$ or  $p'^A_a$ for some
      $a\in\{0,\dots,n-1\}$ (this is because $L_s\ge n$). This interval hence either covers zero weight
      (in the former case, when all points covered are paired with their corresponding guard
      points), or negative weight (in the latter case, where only $p'^A_a$ is picked but not $p^A_a$).

 \item If $x^*\ge n-0.5$ (i.e., \textbf{$I^*$ covers no $A$-points}): then the interval
     covers all points starting from some point $p^B_b$ or  $p'^B_b$ for some $b\in\{0,\dots,n-1\}$
     up to the point $p'^B_0$. This interval hence either covers zero weight (in the former case, when
     all covered are paired with their corresponding guard points), or negative weight (in the latter
     case, where only $p'^B_b$ is picked but not $p^B_b$).

 \item We now know that the left endpoint lies to the left of $n-0.5$, and the right end point lies
     to the right of $n-0.5$. If the left endpoint satisfied $x^*<-0.5$ (i.e., \textbf{$I^*$ covers all $A$-points}),
     then all points $p^A_a$ for
     $a\in\{0,\dots,n-1\}$ will be paired with their guard points, and hence contribute a total
     weight of zero. Further, at most one point $p^B_b$ for $b\in\{0,\dots,n-1\}$ can be left
     unpaired, and hence the weight covered by the interval is either zero or $B_b$ for some $b$.
     Similarly, if the right endpoint satisfied $x^*+L_s>2n-0.5$ (i.e., \textbf{$I^*$ covers all $B$-points}),
     then the cost covered by the
     interval can be either zero or $A_a$ for some $a$.

 \item We are now left with the case where \textbf{$I^*$ covers some $A$-points, and some
     $B$-points}. Consider an interval starting at an integral coordinate $i\in\{0,\dots,n-1\}$.
     Given a length $L_s$, the interval would end at some $i+L_s=(2n-1)-(k_s-i)=(2n-1)-j$ (for
     $j\in\{0,\dots,n-1\}$ satisfying $i+j=k_s$). By Claim~\ref{clm:interval_wt} this interval
     covers weight $A_i+B_j$. Now, if we shift the interval left by some $\delta\in(0,0.5]$, the
     interval will no more cover point $p^B_j$, and could potentially additionally cover the point
     $p'^A_i$, i.e., drop a positive weight point, and include a negative weight point. This
     operation can hence not lead to an optimal placement. Similarly, if we move the interval right
     by some $\delta\in(0,0.5]$, the interval will no more cover point  $p^A_i$, and potentially
     cover point  $p'^B_j$ leading again to a sub-optimal placement. Hence, there exists an optimal
     placement of the interval starting (and ending) at integral coordinates.
\end{enumerate}
Thus, an optimal interval $I^*$ must be of the form $I_{i^*,j^*}=[i^*, (2n-1)-j^*]$ for some $i^*,
j^* \in \{0, \dots, n-1\}$.  The length constraint $L_s=(2n-1)-k_s$ implies $i^*+j^*=k_s$. The
batched MaxRS oracle hence returns the cost
$W_s^* = \max_{i^*+j^*=k_s} w(I_{i^*,j^*})= \max_{i^*+j^*=k_s} (A_{i^*}+B_{j^*}) = C_{k_s}^*$.
\end{proof}

This chain of reductions establishes the proof of hardness
of batched MaxRS (Theorem~\ref{thm:batched-maxrs}).

\section{Lower bound for batched smallest k-enclosing interval}
In this section, we study a closely related problem, that is the batched version of the smallest
$k$-enclosing interval (BSEI). Given a set of points on the real line, the smallest $k$-enclosing
interval problem requires us to find the smallest interval that encloses $k$ points.  More
formally, in the batched version of the problem, we are given a set of $n$ points $p_1,
p_2,\ldots,p_n$ on the real line. The goal is to find the smallest
interval enclosing $k$ points, for all $k\in[n]$.

% We first note that there is a naive quadratic algorithm for this problem.  Initialize an array $A$
% of size $n$ with all values set to $\infty$.  The $i$-th entry $A[i]$ in this array will store the
% minimum length segment required to cover $i$ points. Now, sort the given points in increasing order
% of $x$-coordinate (takes $O(n\log n)$ time), and iterate over all distinct pairs of points (takes
% $O(n^2)$ time via a simple nested loop). For each pair $p_i,p_j$, update
% $A[j-i+1]=\min(A[j-i+1],x_j-x_i)$.

There is a quadratic time algorithm for the problem, and we show that there is also a matching quadratic lower
bound conditioned on the conjectured hardness of the $(\min,+)$-convolution problem.
We achieve the required result by the series of reductions: $(\min,+)$-convolution $\le$ monotone
$(\min,+)$-convolution $\le$ BSEI. The problem definitions and the reductions are given below.

\subsection{Reduction from (min,+)-convolution to monotone (min,+)-convolution}
\begin{definition}[Monotone $(\min,+)$-convolution]
    Given two sequences \( D = (D_0, D_1, \dots, D_{n-1}) \) and \( E = (E_0, E_1, \dots, E_{n-1}) \),
such that \( D_0>D_1>\ldots> D_{n-1}\) and \( E_0> E_1> \ldots> E_{n-1}\), the monotone
$(\min,+)$-convolution produces a new sequence \( F \), where for each $k\in\{0,\ldots,n-1\}$,
\[ F_k = \min_{i + j = k} (D_i + E_j). \]
\end{definition}

\paragraph{Reduction:}
We are given unsorted sequences $A,B$, as part of the instance of $(\min,+)$-convolution. We will generate two
monotone sequences $D,E$, to create an instance of the monotone $(\min,+)$-convolution problem, such
that solving the monotone instance would give us a solution for the original instance.
\begin{enumerate}
    \item Let $\Delta=1+\max_{i\in\{1,\ldots,n-1\}} \max((A_i-A_{i-1}),(B_i-B_{i-1}))$. So the new sequences are, for all
$i\in\{0,\ldots,n-1\}$, $D_i=A_i-i\cdot\Delta$ and  $E_i=B_i-i\cdot\Delta$.
Note that for all $i\in[n-1], D_{i-1}-D_{i} = A_{i-1}-A_{i} + \Delta > 0$, and hence the sequence is
strictly decreasing (and similarly sequence $E$
is strictly decreasing).

    \item Solve the monotone $(\min,+)$-convolution problem to obtain the sequence $F$ such that
\[ F_k=\min_{i+j=k} (D_i+E_j) =\min_{i+j=k}((A_i-i\cdot\Delta)+(B_j-j\cdot\Delta))
=C_k - k\cdot\Delta. \]
    \item Obtain the solution to the $(\min,+)$-convolution problem as $C_k=F_k+k\cdot\Delta$.
Note that the reduction only takes linear time, to compute $\Delta$, and then subtract it from every
element of the input sequences $A$ and $B$.
\end{enumerate}

\subsection{Reduction from monotone (min,+)-convolution to BSEI}
\paragraph{Reduction:}
We are given monotone sequences $D,E$ as part of an instance of monotone  $(\min,+)$-convolution.
Consider the sequence $P$ of $2n$ elements defined as follows: For each
$i\in\{0,\ldots,n-1\}, P_i=-D_{i}+(D_{n-1}-1)$, and $P_{n+i}=E_{(n-1)-i}+(1-E_{n-1})$.
Notice that the added offsets of $(D_{n-1}-1)$ ($(1-E_{n-1})$ resp.) ensure that
$P_i$ ($P_{n+i}$ resp.) remains negative (positive resp.).
%In the figure, the offsets are not explicitly labeled in the interest of clarity and space.

Now let us construct an instance of BSEI with $2n$ points, and the location of the $i$-th
point being given by $P_i$.
See Figure~\ref{fig:arrayred} to see the locations of points in this instance.
Now, if we could solve the BSEI problem, we would obtain a sequence
$G_1,G_2,\ldots,G_{2n}$, where $G_{k}$ denotes the minimum length of an interval covering $k$
points.

\begin{figure}[ht]
    \centering
    \includegraphics[width=0.8\textwidth]{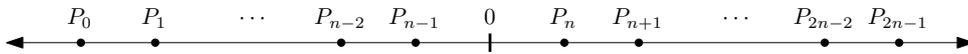}
    \caption{Values of newly constructed sequence as points on the real line.}
    \label{fig:arrayred}
\end{figure}

The distance between arbitrary (say $i$-th) point to its $\ell$-th successor is
$P_{\ell+i}-P_i$, and this covers $\ell+1$ points. So for $k\in\{0,\dots.n-1\}$, we can
write the term $F_{k}$ as,
\begin{align*}
  F_k &= \min_{i\in\{0,\ldots,k\}} \left( E_{k-i} + D_i \right)
        = \min_{i\in\{0,\ldots,k\}} \left( E_{(n-1)-(n-1-k+i)} + D_i \right) \\
        &= (D_{n-1}-1)-(1-E_{n-1})+\min_{i\in\{0,\ldots,k\}} \left( P_{n+(n-1-k+i)} - P_i \right) \\
        &= G_{2n-k} + D_{n-1} + E_{n-1} -2.  
\end{align*}

% \begin{claim}
%     For all $k\in\{0,\dots,n-1\}$, the value of the term $F_k$ obtained from $(\min,+)$-convolution is equal
%     to $G_{2n-k}$ obtained from solving the constructed BSEI instance.
% \end{claim}
% \begin{proof}
% We first note that the distance between arbitrary (say $i$-th) point to its $\ell$-th successor is
% $P_{i+\ell}-P_i$, and this covers $\ell+1$ points. So, for $k\in\{0,\dots.n-1\}$, we can
% write the term $F_{k}$ as,
% \begin{align*}
%     F_k &= \min_{i\in\{0,\ldots,k\}} E_{k-i} + D_i
%         % &= \min_{i\in\{0,\ldots,k\}} E_{n-1-(n-1+i-k)} + D_i \\
%         = \min_{i\in\{0,\ldots,k\}} P_{i+2n-k-1} - P_i
%         = G_{2n-k} \qedhere
% \end{align*}  
% \end{proof}

So, if we could solve BSEI, we would be able to get the solution to the instance of monotone $
(min,+)$-convolution as $F_k=G_{2n-k}+ D_{n-1} + E_{n-1} -2$ for $k\in\{0,1,\ldots,n-1\}$. Note that this reduction only
takes linear time, as we only have to construct the new array by iterating over the given sequences
once. This finishes the proof of Theorem~\ref{thm:bsei}.

\section{Future work}
We conclude the paper with a few open problems.
\begin{itemize}[itemsep=1ex]
    \item Our technique based on output sensitivity and sampling colors in Section~\ref{sec:outsens-colorsample} to obtain $(1-\eps)$-approximation for colored disk MaxRS in $\bbR^2$ (Theorem~\ref{thm:col-balls-1minus-eps-approx}) is quite general, and we believe it will be useful for solving other variants of MaxRS. An immediate open problem is to extend this technique to answer colored MaxRS for boxes and balls in $\bbR^3$.
    
    \item For the batched MaxRS problem in $\bbR^1$, the conditional hardness result\textemdash lower bound of $\Omega(mn)$\textemdash holds for the weighted set of points (Theorem~\ref{thm:batched-maxrs}). It remains open whether a similar lower bound can be established for the unweighted case in $\bbR^1$. The best known upper bound for batched MaxRS for rectangles in $\bbR^2$ is still $O(mn\log n)$ in the unweighted setting (by running the exact $O(n\log{n})$ time algorithm~\cite{imai1983finding, nandy1995unified} for each of the $m$ ranges).

    \item The batched MaxRS problem for  disks in $\bbR^2$ can be solved in $O(mn^2)$ time (by running the exact $O(n^2)$ time algorithm~\cite{chazelle1986circle} for each of the $m$ ranges). Can we prove a matching lower bound?
    
    \item Finally, for the static MaxRS for $d$-balls, a simple upper bound of $\Ot(n^d)$ can be obtained by computing the entire arrangement. It would be interesting to prove if there is a matching lower bound. Currently, such a conditional hardness result is known for disks~\cite{aronov2008approximating}.
\end{itemize}

\bibliographystyle{alpha}
\bibliography{references}
\end{document}